\newtheorem{theorem}{Theorem}
\newtheorem{lemma}{Lemma}
\newtheorem{remark}{Remark}
\newcommand\relphantom[1]{\mathrel{\phantom{#1}}}
\begin{document}

\title{Distributed Parameter Estimation with Quantized Communication via Running Average\thanks{This research is funded by the Republic of Singapore's National Research Foundation through a grant to the Berkeley Education Alliance for Research in Singapore (BEARS) for the Singapore-Berkeley Building Efficiency and Sustainability in the Tropics (SinBerBEST) Program. BEARS has been established by the University of California, Berkeley as a center for intellectual excellence in research and education in Singapore. A preliminary version of this
paper was presented at ICASSP 2015, Brisbane, Australia \cite{ZhuSohXieLiu15}.}}

\author{Shanying~Zhu, Yeng Chai Soh,~\IEEEmembership{Senior~Member,~IEEE,} and~Lihua Xie,~\IEEEmembership{Fellow,~IEEE} \\
\thanks{Copyright (c) 2015 IEEE. Personal use of this material is permitted. However, permission to use this material for any other purposes must be obtained from the IEEE by sending a request to pubs-permissions@ieee.org.}
\thanks{The authors are with Centre for System Intelligence and Efficiency (EXQUISITUS), School of Electrical and Electronic Engineering, Nanyang Technological University, 50 Nanyang Avenue, Singapore 639798
(E-mail: syzhu@ntu.edu.sg,  eycsoh@ntu.edu.sg, elhxie@ntu.edu.sg).}
}

\markboth{IEEE Transactions on Signal Processing, DOI: 10.1109/TSP.2015.2441034}%
{S. Zhu \MakeLowercase{\textit{et al.}}: Distributed Parameter Estimation with Quantized Communication}

\maketitle

\begin{abstract}
	In this paper, we consider the problem of parameter estimation over sensor networks in the presence of quantized data and directed communication links. We propose a two-stage distributed algorithm aiming at achieving the centralized sample mean estimate in a distributed manner. Different from the existing algorithms, a running average technique is utilized in the proposed algorithm to smear out the randomness caused by the probabilistic quantization scheme. With the running average technique, it is shown that the centralized sample mean estimate can be achieved both in the mean square and almost sure senses, which is not observed in the standard consensus algorithms. In addition, the rates of convergence are given to quantify the mean square and almost sure performances. Finally, simulation results are presented to illustrate the effectiveness of the proposed algorithm and highlight the improvements by using running average technique.  
\end{abstract}

\IEEEpeerreviewmaketitle
\begin{IEEEkeywords}
	Distributed estimation, probabilistic quantization, running average, directed topology
\end{IEEEkeywords}

\IEEEpeerreviewmaketitle

\section{Introduction}
	Sensor networks, composed
	of a large number of signal processing devices (nodes), are massively distributed systems for sensing and processing of spatially dense data with wide applications both in military and civilian scenarios.
	A  popular application of sensor networks is the decentralized estimation of
	unknown parameters using samples collected from nodes \cite{KamFarRoy13,SchRibGia08,XiaRibLuo_etal06,XiaoBoyLal05}. Two prevailing topologies for such task are fusion center based networks and ad hoc networks \cite{XiaRibLuo_etal06}. Compared with fusion  center based networks, ad hoc networks have several advantages including  scalability and resilience of node failure. In
	a typical estimation problem in ad hoc networks, nodes make noisy measurements
	of variables of interest. The main concern is how to utilize
	the samples to produce a desired estimate by only
	exchanging data between neighboring nodes.

	Distributed estimation in ad hoc networks is usually based on successive
	refinements of local estimates maintained at individual nodes. In most applications, nodes are powered by batteries with finite lifetime and thus have limited computing and communication capabilities. Another aspect is bandwidth constraint, which renders the transmission of large volume of real-valued data impractical. This means that the data exchanged between nodes need to be quantized prior to transmission. However, this process introduces certain quantization errors which could have severe effects. The errors will be accumulated throughout the successive iterations, making the estimation process fluctuating or even divergent \cite{XiaoStepKim07}. 
    
    A number of distributed consensus algorithms have been proposed to address the problem of estimation with quantized communication. Most of them assume symmetric communication between nodes. Actually, in ad hoc networks, communication links between certain pairs of nodes may be directed, i.e., a node can receive information from another node but not vice versa. This could be caused by non-homogeneous interference, packet collision and so on.   Motivated by this observation, in the paper, we consider the problem of distributed estimation over directed topologies and examine its convergence behavior under the effect of quantized communication.

\subsection{Related work}
	Distributed consensus algorithms are effective ways to solve the estimation problems in sensor networks, where the final states are mostly chosen as the estimates. Recently, much attention has been paid to the effect of quantization on consensus algorithms. For instance, deterministic quantization schemes are used in \cite{CarFagFraZam10,ChaLiuBas14,LiuLiXieetal13,NedOlsOzdTsi09}. In particular, uniform and truncation quantizers were investigated in \cite{CarFagFraZam10,ChaLiuBas14,NedOlsOzdTsi09}, where convergence can only be guaranteed up to a neighborhood of the target average and upper bounds characterizing the gaps were provided. 
	Ref.~\cite{LiuLiXieetal13} considered the logarithmic quantization scheme, which showed that the consensus error is upper bounded by a quantity depending on the quantization resolution and initial states. In \cite{KasBasSri07}, a quantized consensus algorithm was introduced with an additional constraint that the states of the nodes are integers. This constraint leads to an integer approximation of the target average. Extension to the directed topologies has been examined in \cite{CaiIsh11}. 

	Another thread is to adopt probabilistic quantization schemes. In \cite{AysCoatRabb08}, the dithered quantization scheme was used. It was shown that consensus to a random variable whose expectation is equal to the desired average can be reached almost surely. This kind of convergence was also observed for gossip algorithms \cite{CarFagFraZam10}. In fact, even employing the decaying link weights satisfying a persistence condition cannot guarantee the convergence to the target average \cite{KarMoura10}. The quantization scheme introduced in \cite{FangLi09} adaptively adjusts the quantization threshold and step-size by learning from previous runs, in a way such that the target average can be achieved in the mean square sense.
	Another method that achieves the target average is to explore the temporal information of the successive state \cite{FangLi10}. Most of the above works assume that the communication topology
	is symmetric, which may not be realistic as discussed previously. Moreover, the symmetric requirement imposes much effort on the nodes to acquire necessary topology information to construct the weight matrices. Even if the symmetric communication is assumed, the aforementioned results indicate that convergence to the target average  is not possible in most cases using simple quantizers.

	To further address the residual issue of quantization, dynamic encoding/decoding schemes were proposed  in \cite{CarlBullZamp10,LiFuXieZhang11} to ensure the convergence to the desired average value. Specifically,  Ref. \cite{LiFuXieZhang11} showed that the number of quantization bits can be reduced  to merely one by appropriately designing the scaling function and some control parameters. The result of \cite{LiFuXieZhang11} has been extended to directed graphs in \cite{LiLiuWangLin13}, where the weighted average instead of the desired average was shown to be  achievable. Although dynamic quantizations perform quite well, some spectral properties of the Laplacian matrix of the underlying topology have to be known in advance based on which the encoder-decoder parameters are carefully chosen.  A similar idea was adopted in \cite{ThanKokPuFros13} to design a progressive quantizer that progressively reduces the quantization intervals during the convergence of the algorithm.

  \subsection{Summary of contributions} 
  	In this paper, we consider the problem of parameter estimation over directed communication topologies. Each node  has real-valued states but can only exchange information with its neighbors utilizing quantized communication. 
  	The main contributions are summarized as follows:

  	Firstly, we propose a two-stage distributed estimation algorithm in which the nodes utilize basic probabilistic quantization.  At the first stage, we estimate the left eigenvector with respect to the zero eigenvalue of the Laplacian matrix. This information is then used at the second stage to construct a correction term aiming at compensating for the unidirectional effect of directed communication links.  At both stages, the running average technique is utilized to limit the quantization effect on the estimation process.  Unlike \cite{AysCoatRabb08,CarFagFraZam10,ChaLiuBas14,KarMoura10,LiuLiXieetal13,NedOlsOzdTsi09}, our algorithm does not require the weight matrix to be doubly stochastic. And it can be run over any strongly connected topology without any knowledge of the out-neighbor information and the left eigenvector of the corresponding Laplacian matrix  as required by those in \cite{CaiIsh11,LiLiuWangLin13}.

	Secondly, a comprehensive convergence analysis of the proposed algorithm is given. With the running average technique, we show that the centralized sample mean estimate can be achieved exactly both in the mean square and almost sure senses. The results extend the one in \cite{FangLi10} from undirected graphs to directed graphs. Moreover, the proposed algorithm does not depend on the complicated design of quantization schemes as in \cite{CarlBullZamp10,FangLi09,LiFuXieZhang11,LiLiuWangLin13,ThanKokPuFros13}. Our analysis relies on the theoretical tools of the laws of large numbers and the iterated logarithm. The theoretical results reveal that simple quantization schemes can be employed to solve the parameter estimation problems over networks, provided that a suitable form of estimator is introduced.  

    The paper is organized as follows: In Section~\ref{sec:problem}, we present the problem formulation and some preliminary results needed in the subsequent sections. In Section~\ref{sec:distributed_estimation_algorithm_on_digraphs}, we describe the proposed two-stage distributed algorithm along with some implementation considerations. Convergence analyses both in the mean square and almost sure senses are presented in  Section~\ref{sec:convergence_analysis}. 
    Section~\ref{sec:simulation_results} presents the simulation results to illustrate the effectiveness of the proposed algorithm, followed by the conclusions and future works in Section~\ref{sec:concluding_remarks}.

	\emph{Notation:} $\mathbb{Z}_{\geq a}$ stands for the subset of integers greater than $a$. For two functions $f(k)$ and $g(k)$, $f(k)=o(g(k))$ means that $\lim_{k\to \infty} f(k)/g(k)=0$. 	
	We will drop $o(g(k))$ in $g(k)+o(g(k))$ if no ambiguity arises.  We use $O(1)$ to denote a constant, which may vary at different places.
	$\mathbb{R}^{m\times
	n}$ denotes the set of all $m\times n$ matrices with the Euclidean norm $\|\cdot\|_2$ and Frobenius norm $\|\cdot\|_F$ with compatible vector norm $\|\!\cdot\!\|$. We use bold uppercase and lowercase letters to denote matrices and vectors, respectively. $\mathbf{I}$ is the identity matrix, $\mathbf{1},\mathbf{0}$ are all-one and all-zero vectors, respectively.
	$\lambda_{\max}(\cdot)$ represents the largest eigenvalue of a symmetric matrix.  For a random  vector $\mathbf{x}$, $\mathbb E\{\mathbf{x}\}$ denotes its expectation and $\text{Cov}(\mathbf{x})$ its covariance. 

\section{Problem formulation}\label{sec:problem}

	Consider the estimation problem in a sensor network consisting of $n$ homogeneous nodes, each making observations of an unknown parameter $\theta\in \mathbb{R}$. The observations are corrupted by additive noises, i.e.,
\[
	y_i=\theta+w_i, \ i=1, 2, \dots,n,
\]
	where $\{w_i\}_{i=1}^n$ are zero mean, i.i.d. Gaussian noises. If there is a fusion center having access to all the samples $\{y_i\}_{i=1}^N$, then the  sample mean estimator $\hat{\theta} \triangleq (1/n)\sum_{i=1}^n y_i$ is the best one in the sense of Cram\'er-Rao lower bound \cite[p.30]{Kay93}. This estimator is universal since it does not require any information of the noise \cite{Luo05}.

	The distributed estimation problem is concerned with computing the centralized sample mean estimate 
	$\hat{\theta}$ iteratively at every node without requiring global knowledge of $\{y_i\}_{i=1}^n$ and the network topology.
	We model the communication topology over which the nodes exchange information as a \emph{weighted directed graph} $\mathcal{G}=(\mathcal{V},\mathcal{E},\mathbf{A})$, where 
	$\mathcal{V}=\{1,2,\dots,n\}$ is the set of nodes, $\mathcal{E}\subset\mathcal{V}\times\mathcal{V}$ denotes all the unidirectional communication links between nodes and $\mathbf{A}=[a_{ij}]_{n\times n}$ is composed of weights $a_{ij}>0$ associated with each directed edge $(j,i)\in \mathcal{E}$. It is assumed that there are no self-loops in $\mathcal{G}$. The directed edge $(j,i)$ means that node $i$ can receive data from node $j$. We denote $\mathcal{N}_i=\{j: (j,i)\in \mathcal{E}\}$ as the set of \emph{neighbors} of node $i$. We make the following assumption:

	\emph{Assumption 1:} Graph $\mathcal{G}$ is strongly connected, i.e., for any two nodes $i$ and $j$, there exists a directed path from $i$ to $j$.

	In the case of limited communication rate between nodes, each node will first quantize the data prior to its transmission to the neighbors. In this paper, we adopt the following estimation algorithm at each node $i$,
\begin{align}\label{eq:algorithm}
	x_i(t+1)&=\hat{x}_i(t)+\alpha\sum_{j\in\mathcal{N}_i} a_{ij}[\mathcal{Q}(\hat{x}_j(t))-\mathcal{Q}(\hat{x}_i(t))],
\end{align}
	with initial guess $x_i(t_0)=y_i$, where $\alpha>0$ is a constant,  $\mathcal{Q}(\cdot)$ denotes the quantization operation and 
\begin{equation}
	\hat{x}_i(t)\triangleq x_i(t)+\epsilon_i(t),
\end{equation}
	in which $\epsilon_i(t)$ is a correction term to compensate for the unidirectional effects of communication links. The goal is to design an appropriate $\epsilon_i(t)$   such that all the nodes can asymptotically acquire  the centralized $\hat{\theta}$  over any strongly connected topology.

\begin{remark}\label{rem:correction}
	For standard consensus algorithms without quantization, i.e., $x_i(t+1)=x_i(t)+\alpha\sum_{j\in\mathcal{N}_i} a_{ij}[x_j(t)-x_i(t)]$, it is well known that the state $x_i(t)$ will converge to the weighted average of $y_i$ rather than $\hat{\theta}$, where the weights are determined by the spectral knowledge of  graph $\mathcal{G}$. The introduction of the correction term $\epsilon_i(t)$  in \eqref{eq:algorithm} is meant to drive the weighted average to the sample mean estimate.
\end{remark}

\begin{remark}\label{rem:algorithm}
	The algorithm \eqref{eq:algorithm} belongs to the compensating update rule \cite{FrasCarFagZamp09,CarFagFraZam10,FangLi10,ThanKokPuFros13}, where both the real-valued states and their quantized values are used to compute the states at next step. This strategy is meant to fully exploit the implicit channel feedback which comes from quantization. 
\end{remark}

\subsection{Probabilistic quantization}
	We present a brief review of the quantization scheme used in the paper. Each node is equipped with a probabilistic quantizer $\mathcal{Q}(\cdot): \mathbb{R}\to \mathcal{S}_{\Delta}$ with the set of quantization levels $\mathcal{S}_{\Delta}= \{k\Delta: k\in \mathbb{Z}\}$, where $\Delta$ is the quantization step-size. For any $x\in \mathbb{R}$, it is  quantized in a probabilistic manner: 
\[
	 \mathcal{Q}(x)=\begin{cases} 
	 \left\lceil \frac{x}{\Delta}\right\rceil \Delta,&\text{with probability} \ p,\\
	 \left\lfloor \frac{x}{\Delta}\right\rfloor \Delta, & \text{with probability}\ 1-p,
	 \end{cases}
\]
	where $p=x/\Delta-\lfloor x/\Delta\rfloor$, $\lfloor \cdot \rfloor$ and $\lceil \cdot \rceil$ denote the floor and ceiling functions, respectively. 
	We can prove  that the quantized message $\mathcal{Q}(x)$ is an unbiased estimator of $x$ with finite variance \cite{AysCoatRabb08,CarFagFraZam10}, that is,
\begin{equation}\label{eq:varianncequantierror}
	\mathbb{E}\{\mathcal{Q}(x)\}=x, \ \ \mathbb{E}\left\{(\mathcal{Q}(x)-x)^2\right\}\leq \frac{\Delta^2}{4}.
\end{equation}
	Further, it is obvious that
\begin{equation}\label{eq:boundquantierror}
	|\mathcal{Q}(x)-x|\leq \Delta. 
\end{equation}

	Actually, the above quantization is equivalent to a substractively dithered method \cite{AysCoatRabb08}. If the dither sequence satisfies the Schuchman conditions, then the quantization errors are statistically independent from each other and the input~\cite{LipWannVan92}. 	We make the following natural assumption of statistical independence:

	\emph{Assumption 2}: The quantization errors are independent from the data, and are  temporally\footnote{The spatial independence of quantization errors is introduced to ease the notation. All the results can be easily extended to the non-spatial case.} and spatially independent. 

\subsection{Averaging technique} 
	Existing results in \cite{AysCoatRabb08,CarFagFraZam10,KarMoura10} reveal that the state of consensus algorithms is not a qualified estimator in the case of basic probabilistic quantization, as there is always residue between the final state and $\hat{\theta}$ unless certain adaptive mechanism is adopted \cite{FangLi09,ThanKokPuFros13}. We need to find an appropriate  form of estimator to tackle the quantization issue.

	Statistics tells us that large samples have smoothing effects: The wild randomness that always exists in small samples will be smeared out \cite[p.201]{Gut13}. By Assumption~2, the quantization errors are temporally independent. This temporal information has been used in \cite{FangLi10} to investigate the consensus seeking over undirected graphs, which motivates us to adopt the following running average to smooth the samples
\begin{equation}\label{eq:arithmeticmean}
	\bar{x}_i(K)\triangleq \frac{1}{K} \sum_{k=t_0+1}^{t_0+K}x_i(k), \ \ \forall i=1,2,\dots,n.
\end{equation}
    The new quantity $\bar{x}_i(K)$ will be used as the estimate of the unknown parameter $\theta$ at node $i$. This  formulation of the distributed estimation problem over sensor networks differs from the standard consensus algorithms, where the focus is the performance of  $x_i(k)$ for consensus algorithms.

\subsection{Preliminaries}
	One important concept for distributed algorithms is the Laplacian $\mathbf{L}$ corresponding to graph $\mathcal{G}$, which is defined as $\mathbf{L}\triangleq \mathbf{D}- \mathbf{A}$, where $\mathbf{D}\triangleq \text{diag}\{d_1,d_2,\dots,d_n\}$ and $d_i=\sum_{j\in \mathcal{N}_i} a_{ij}$, $\forall i$. It is clear that $\mathbf{L} \mathbf{1}=\mathbf{0}$, that is, 0 is an eigenvalue of $\mathbf{L}$.

\begin{lemma}\label{lem:matrixproperty}
	Let $\boldsymbol{\omega}=[\omega_1,\omega_2,\dots,\omega_n]^T$ be the left eigenvector corresponding to the zero eigenvalue of $\mathbf{L}$ with $\mathbf{1}^T \boldsymbol{\omega}=1$. Then under Assumption~1, $\boldsymbol{\omega}$ is positive and the matrix $\mathbf{Q}\triangleq \mathbf{P}-\mathbf{1}\boldsymbol{\omega}^T$ with $\mathbf{P}\triangleq \mathbf{I}-\alpha \mathbf{L}$ has the following properties:
\begin{enumerate}
	\item[i)] \emph{Spectrum}: 
	Let $0=\lambda_1(\mathbf{L}), \lambda_2(\mathbf{L}),\dots, \lambda_n(\mathbf{L})$ be the eigenvalues of Laplacian $\mathbf{L}$, then the spectrum of $\mathbf{Q}$ is $\{0,1-\alpha \lambda_i(\mathbf{L}),i=2,3,\dots,n\}$;
	\item[ii)] \emph{Spectral radius}: 
	The spectral radius  $\rho(\mathbf{Q})<1$ if and only if  $0<\alpha<\min_{2\leq i\leq n}\left\{2 \text{Re}(\lambda_i(\mathbf{L}))/|\lambda_i(\mathbf{L})|^2\right\}$, where $\text{Re}(\lambda_i(\mathbf{L}))$ represents the real part of $\lambda_i(\mathbf{L})$;
	\item[iii)] \emph{Bounds on Frobenius norm}: 
	The Frobenius norm of power $\mathbf{Q}^k$, $\forall k\in \mathbb{Z}_{\geq 1}$,  is bounded by
\[
	\|\mathbf{Q}^k\|_F\leq n c_\mathbf{Q} k^{q-1}\rho^k(\mathbf{Q}),
\]
	where $c_\mathbf{Q}>0$ is a constant depending only on $\mathbf{Q}$ and $q\triangleq \max_{\lambda_i(\mathbf{Q})\neq 0}\{q_i\}$, $q_i$ is the multiplicity of $\lambda_i(\mathbf{Q})$ in the
	minimal polynomial of $\mathbf{Q}$.	
\end{enumerate}
\end{lemma}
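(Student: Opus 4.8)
The plan is to relate $\mathbf{Q}$ to the row-stochastic matrix $\mathbf{P}=\mathbf{I}-\alpha\mathbf{L}$ through Perron--Frobenius theory, read off its spectrum, and then obtain the norm bound from the Jordan form. I would begin by fixing $\alpha$ small enough that $1-\alpha d_i\geq 0$ for every $i$, so that $\mathbf{P}$ is entrywise nonnegative; since $\mathbf{P}\mathbf{1}=\mathbf{1}-\alpha\mathbf{L}\mathbf{1}=\mathbf{1}$ it is row-stochastic, and Assumption~1 renders it irreducible. Perron--Frobenius then forces $1$ to be a simple eigenvalue of $\mathbf{P}$ with a strictly positive left eigenvector, which after the normalization $\mathbf{1}^T\boldsymbol{\omega}=1$ is exactly $\boldsymbol{\omega}$; this yields the positivity claim. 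Simplicity of the eigenvalue $1$ of $\mathbf{P}$ is equivalent to simplicity of the zero eigenvalue of $\mathbf{L}$, and a Gershgorin argument applied to $\mathbf{L}=\mathbf{D}-\mathbf{A}$ places every eigenvalue in the closed right half-plane, meeting the imaginary axis only at the origin, so that $\text{Re}(\lambda_i(\mathbf{L}))>0$ for all $i\geq 2$.

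For part i), the key observation is that $\mathbf{1}\boldsymbol{\omega}^T$ is the spectral projector of $\mathbf{P}$ onto its eigenvalue-$1$ eigenspace. A direct computation gives $\mathbf{Q}\mathbf{1}=\mathbf{P}\mathbf{1}-\mathbf{1}(\boldsymbol{\omega}^T\mathbf{1})=\mathbf{1}-\mathbf{1}=\mathbf{0}$, so the eigenvalue $1$ of $\mathbf{P}$ is replaced by $0$ in $\mathbf{Q}$. For any eigenvector $\mathbf{v}_i$ of $\mathbf{P}$ belonging to $1-\alpha\lambda_i(\mathbf{L})$ with $i\geq 2$, biorthogonality of the left and right eigenvectors of $\mathbf{P}$ gives $\boldsymbol{\omega}^T\mathbf{v}_i=0$, whence $\mathbf{Q}\mathbf{v}_i=\mathbf{P}\mathbf{v}_i=(1-\alpha\lambda_i(\mathbf{L}))\mathbf{v}_i$. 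Hence $\mathbf{Q}$ retains precisely the eigenvalues $1-\alpha\lambda_i(\mathbf{L})$, $i=2,\dots,n$, and acquires a single $0$ in place of the eigenvalue $1$, establishing the stated spectrum. Part ii) then follows by expanding $|1-\alpha\lambda_i(\mathbf{L})|^2=1-2\alpha\,\text{Re}(\lambda_i(\mathbf{L}))+\alpha^2|\lambda_i(\mathbf{L})|^2$; for $\alpha>0$ the inequality $|1-\alpha\lambda_i(\mathbf{L})|<1$ is equivalent to $\alpha<2\,\text{Re}(\lambda_i(\mathbf{L}))/|\lambda_i(\mathbf{L})|^2$, and since $\rho(\mathbf{Q})=\max_{2\leq i\leq n}|1-\alpha\lambda_i(\mathbf{L})|$, requiring it for every $i\geq 2$ produces exactly the claimed two-sided bound on $\alpha$, whose positivity is guaranteed by $\text{Re}(\lambda_i(\mathbf{L}))>0$.

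For part iii), I would pass to the Jordan canonical form $\mathbf{Q}=\mathbf{S}\mathbf{J}\mathbf{S}^{-1}$, so that $\mathbf{Q}^k=\mathbf{S}\mathbf{J}^k\mathbf{S}^{-1}$. Inside a Jordan block of size $m$ with eigenvalue $\lambda$, the $(r,s)$ entry of the $k$-th power equals $\binom{k}{s-r}\lambda^{k-(s-r)}$, which is $O(1)\,k^{m-1}|\lambda|^k$. The blocks attached to the zero eigenvalue are nilpotent and vanish once $k$ exceeds their size, so the growth is controlled solely by the nonzero eigenvalues; for each such block $m\leq q$ and $|\lambda|\leq\rho(\mathbf{Q})$, giving the uniform bound $O(1)\,k^{q-1}\rho^k(\mathbf{Q})$ per entry. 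Absorbing the conditioning constants of $\mathbf{S}$ and $\mathbf{S}^{-1}$ into $c_\mathbf{Q}$ and using $\|\cdot\|_F\leq n\|\cdot\|_2$ for an $n\times n$ matrix then delivers $\|\mathbf{Q}^k\|_F\leq n c_\mathbf{Q} k^{q-1}\rho^k(\mathbf{Q})$.

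I expect the main obstacle to be the first paragraph: correctly invoking Perron--Frobenius on $\mathbf{P}$ to secure simplicity of the Perron eigenvalue, positivity of $\boldsymbol{\omega}$, and strict positivity of $\text{Re}(\lambda_i(\mathbf{L}))$ for $i\geq 2$. Everything downstream depends on these structural facts --- the eigenprojector identity underpinning part i), the positivity of the admissible interval in part ii), and the very meaningfulness of the condition $\rho(\mathbf{Q})<1$ --- so this step must be handled with care, in particular choosing $\alpha$ in the admissible range and ruling out nonzero eigenvalues of $\mathbf{L}$ on the imaginary axis.
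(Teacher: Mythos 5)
Your proposal is correct and follows essentially the same route as the paper: parts i) and ii) use the identical biorthogonality/deflation argument ($\boldsymbol{\omega}^T\boldsymbol{\upsilon}_i=0$ gives $\mathbf{Q}\boldsymbol{\upsilon}_i=(1-\alpha\lambda_i(\mathbf{L}))\boldsymbol{\upsilon}_i$ together with $\mathbf{Q}\mathbf{1}=\mathbf{0}$) and the same expansion of $|1-\alpha\lambda_i(\mathbf{L})|^2$, with $\text{Re}(\lambda_i(\mathbf{L}))>0$ secured by the same Ger\v{s}gorin-type reasoning. The only differences are cosmetic: you prove part iii) directly from the Jordan form where the paper simply cites Gautschi's theorem on powers of matrices, and you supply an explicit Perron--Frobenius argument for the positivity of $\boldsymbol{\omega}$, which the paper asserts without proof.
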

\begin{proof}
	See Appendix~\ref{app:matrixproperty}.
\end{proof}

	The next lemma presents a way to choose the parameter $\alpha$ such that $\mathbf{Q}$ has some desired properties as given in Lemma~\ref{lem:matrixproperty}.
\begin{lemma}\label{lem:parameter}
	Let $0<\alpha<1/\max_{i} d_i$, then under Assumption~1,  we have $\rho(\mathbf{Q})<1$. Further, for all $k\in \mathbb{Z}_{\geq 1}$,
\[
	\bigl\|\mathbf{I}-\mathbf{Q}^k\bigr\|_F\leq \sqrt{n+2+n^2 c_\mathbf{Q}^2 k^{2(q-1)}\rho^{2k}(\mathbf{Q})}\leq c_{\mathbf{Q},n},
\]
   where $c_{\mathbf{Q},n}^2\triangleq n+2+n^2c_{\mathbf{Q}}^2((1-q)/(e\log \rho(\mathbf{Q})))^{2(q-1)}$.
\end{lemma}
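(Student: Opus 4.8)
The plan is to prove the two assertions separately, leaning on Lemma~\ref{lem:matrixproperty} throughout.

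\emph{Spectral radius.} First I would establish $\rho(\mathbf{Q})<1$ by verifying the sufficient condition in part ii) of Lemma~\ref{lem:matrixproperty}, namely $\alpha<\min_{2\leq i\leq n}\{2\,\text{Re}(\lambda_i(\mathbf{L}))/|\lambda_i(\mathbf{L})|^2\}$. The tool is the Gershgorin disc theorem applied to $\mathbf{L}$: since $L_{ii}=d_i$ and $\sum_{j\neq i}|L_{ij}|=\sum_{j\in\mathcal{N}_i}a_{ij}=d_i$, every eigenvalue $\lambda$ of $\mathbf{L}$ satisfies $|\lambda-d_j|\leq d_j$ for some $j$. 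Splitting $\lambda$ into real and imaginary parts, the disc inequality $(\text{Re}(\lambda)-d_j)^2+(\text{Im}(\lambda))^2\leq d_j^2$ rearranges to $|\lambda|^2\leq 2\,\text{Re}(\lambda)\,d_j$. Each $d_j>0$ by strong connectivity, so for the nonzero eigenvalues (those with $i\geq 2$, nonzero because strong connectivity makes the zero eigenvalue simple) this forces $\text{Re}(\lambda)>0$ and gives $2\,\text{Re}(\lambda_i(\mathbf{L}))/|\lambda_i(\mathbf{L})|^2\geq 1/d_j\geq 1/\max_l d_l$. Hence the minimum over $i\geq 2$ is at least $1/\max_l d_l>\alpha$, and Lemma~\ref{lem:matrixproperty} ii) yields $\rho(\mathbf{Q})<1$.

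\emph{Frobenius bound --- structural reduction.} The key observation for the second assertion is that the hypothesis $\alpha<1/\max_i d_i$ makes $\mathbf{P}=\mathbf{I}-\alpha\mathbf{L}$ entrywise nonnegative: its diagonal entries are $1-\alpha d_i>0$ and its off-diagonal entries are $\alpha a_{ij}\geq 0$; with $\mathbf{P}\mathbf{1}=\mathbf{1}$ this makes $\mathbf{P}$ row-stochastic. Next I would record the identity $\mathbf{Q}^k=\mathbf{P}^k-\mathbf{1}\boldsymbol{\omega}^T$, proved by a one-line induction using $\mathbf{P}\mathbf{1}=\mathbf{1}$, $\boldsymbol{\omega}^T\mathbf{P}=\boldsymbol{\omega}^T$ and $\boldsymbol{\omega}^T\mathbf{1}=1$ (so the cross terms collapse and $\mathbf{1}\boldsymbol{\omega}^T$ is idempotent). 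Expanding the Frobenius norm as a trace then gives
\[
	\|\mathbf{I}-\mathbf{Q}^k\|_F^2=n-2\,\mathrm{tr}(\mathbf{Q}^k)+\|\mathbf{Q}^k\|_F^2 .
\]

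\emph{Frobenius bound --- controlling the trace and finishing.} From the identity, $\mathrm{tr}(\mathbf{Q}^k)=\mathrm{tr}(\mathbf{P}^k)-\mathrm{tr}(\mathbf{1}\boldsymbol{\omega}^T)=\mathrm{tr}(\mathbf{P}^k)-1$. Since $\mathbf{P}\geq 0$ entrywise, so is $\mathbf{P}^k$, whence its diagonal entries are nonnegative and $\mathrm{tr}(\mathbf{P}^k)\geq 0$; therefore $\mathrm{tr}(\mathbf{Q}^k)\geq -1$, which is exactly the source of the additive ``$+2$''. Combining this with $\|\mathbf{Q}^k\|_F^2\leq n^2 c_\mathbf{Q}^2 k^{2(q-1)}\rho^{2k}(\mathbf{Q})$ from Lemma~\ref{lem:matrixproperty} iii) yields the first inequality. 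For the second (the uniform-in-$k$ constant $c_{\mathbf{Q},n}$) it remains to show $k^{q-1}\rho^k(\mathbf{Q})\leq((1-q)/(e\log\rho(\mathbf{Q})))^{q-1}$; I would treat $g(x)=x^{q-1}\rho^x(\mathbf{Q})$ as a function of a real variable, locate its unique interior maximizer $x^\star=(q-1)/(-\log\rho(\mathbf{Q}))$ from $g'=0$, and evaluate $g(x^\star)=((1-q)/(e\log\rho(\mathbf{Q})))^{q-1}$. Since the integers $k\geq 1$ lie in the domain, the claimed ceiling follows (the case $q=1$ is immediate, as the power is then $1$).

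\emph{Main obstacle.} The only genuinely non-obvious step is identifying the ``$+2$'': bounding $\|\mathbf{I}-\mathbf{Q}^k\|_F$ by the naive triangle inequality $\sqrt{n}+\|\mathbf{Q}^k\|_F$ produces a cross term of the wrong size. The clean constant emerges only from the trace expansion together with the nonnegativity and row-stochasticity of $\mathbf{P}$, which is precisely what the hypothesis $\alpha<1/\max_i d_i$ buys. Everything else is Gershgorin estimation and a routine single-variable optimization.
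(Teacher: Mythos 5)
Your proposal is correct and follows essentially the same route as the paper: Ger\v{s}gorin discs to show $2\,\mathrm{Re}(\lambda_i(\mathbf{L}))/|\lambda_i(\mathbf{L})|^2\geq 1/\max_l d_l$ (the paper phrases this geometrically via an angle $\beta$ with $\cos\beta=|\tilde{\lambda}_i|/(2d_i)$, you rearrange the disc inequality algebraically to $|\lambda|^2\leq 2\,\mathrm{Re}(\lambda)\,d_j$ --- same fact), then the trace expansion $\|\mathbf{I}-\mathbf{Q}^k\|_F^2=n-2\,\mathrm{tr}(\mathbf{Q}^k)+\|\mathbf{Q}^k\|_F^2$ with $\mathrm{tr}(\mathbf{Q}^k)\geq-1$ from the nonnegativity of $\mathbf{P}$, and the same single-variable maximization of $t^{q-1}\rho^t(\mathbf{Q})$ at $T_*=(1-q)/\log\rho(\mathbf{Q})$. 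Your added detail (the induction for $\mathbf{Q}^k=\mathbf{P}^k-\mathbf{1}\boldsymbol{\omega}^T$ and the explicit entrywise check that $\mathbf{P}\geq 0$) only fills in steps the paper leaves implicit.
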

\begin{proof}
	See Appendix~\ref{app:parameter}.
\end{proof}

\section{Distributed estimation algorithm over directed topologies via Running Average} 
\label{sec:distributed_estimation_algorithm_on_digraphs}
	In this section, the averaging technique proposed in the previous section is applied to the estimation problem to achieve the centralized sample mean estimate in a distributed manner over directed communication topologies. 

    Different from undirected communication topologies, the primary challenge of achieving centralized sample mean estimate over directed sensor networks lies in that the state sum of nodes needs not be preserved, thereby causing shifts in the average. In fact, in this case, only a weighted version of the sample mean estimate, i.e., $\boldsymbol{\omega}^T \mathbf{y}\neq \hat{\theta}$, can be asymptotically obtained \cite{RenBea05,LiLiuWangLin13}, where $\boldsymbol{\omega}$ is the left eigenvector of $\mathbf{L}$  associated with the zero eigenvalue and $\mathbf{y}=[y_1,\dots,y_N]^T$. We note that several techniques have been proposed in the literature to tackle the issue of directed  topologies for consensus algorithms. In these algorithms, either an extra variable is associated with each node by assuming some out-neighbor information \cite{BenBlonThiTsiVet10,CaiIsh12,DomHad11} or certain compensation mechanism related with the left eigenvector $\boldsymbol{\omega}$ is performed~\cite{PriGasetal14}.

	In this paper, we follow the latter approach and borrow some ideas from \cite{PriGasetal14} to deal with the unidirectional effect arising from directed topologies. The main advantage of the method is that we do not need any knowledge of the out-neighbor information as required by those in \cite{BenBlonThiTsiVet10,CaiIsh12,DomHad11}. The proposed algorithm is composed of two stages: At the first stage, we apply the averaging technique to estimate the left eigenvector $\boldsymbol{\omega}$;  At the second stage, we design the correction term $\epsilon_i(t)$ in \eqref{eq:algorithm} to compensate for the effect of the directed links by using estimates  obtained at the first stage. A distributed estimation algorithm via interwinding these two stages is then proposed.

\subsection{Distributed estimation of the left eigenvector $\boldsymbol{\omega}$}
	At the first stage, each node $i$ maintains a vector $\mathbf{z}_i=[z_{i1},z_{i2},\dots,z_{in}]^T$ to store the estimate of $\boldsymbol{\omega}$.  At each iteration, the nodes update their variables as follows:
\begin{equation}\label{eq:estimateeigenvec}
	\mathbf{z}_i(t+1)=\mathbf{z}_i(t)+\alpha\sum_{j\in\mathcal{N}_i} a_{ij} [\mathcal{Q}(\mathbf{z}_j(t))-\mathcal{Q}(\mathbf{z}_i(t))],
\end{equation}
	with initial values $z_{ii}(0)=1$, $z_{ij}(0)=0$, $\forall j\neq i$, where $0<\alpha<1/\max_i d_i$ and $\mathcal{Q}(\cdot)$ is componentwise for vectors. 

	In order to ensure that all nodes can achieve reliable estimates of $\boldsymbol{\omega}$, it suffices to guarantee that $\mathbf{Z}(t)=[\mathbf{z}_1(t),\mathbf{z}_2(t),\dots, \mathbf{z}_n(t)]^T$ converges to $\mathbf{1}\boldsymbol{\omega}^T$. This is true if there are no quantization errors \cite[Theorem 8.4.4]{HorJoh13}. However, it is no longer the case if some quantization errors are present.  Actually, using the similar arguments as in \cite{FrasCarFagZamp09}, we note that  the protocol \eqref{eq:estimateeigenvec} can only  converge to a neighborhood of $\mathbf{1}\boldsymbol{\omega}^T$ with non-vanishing errors.
	Fortunately, the temporal information of the quantization errors as assumed in Assumption~2 can be exploited, of which the temporal independence enables the running average to fall under the purview of the law of large numbers of independent random vectors. This motivates us to apply the averaging technique discussed in Section~\ref{sec:problem} to  remove noise propagation, and adopt 
\[
   \bar{\mathbf{Z}}(K)=[\bar{z}_{ij}(K)]_{n\times n}\triangleq \frac{1}{K}\sum_{k=k_0+1}^{k_0+K} \mathbf{Z}(k)
\]
    as the estimate of $\mathbf{1}\boldsymbol{\omega}^T$.

    The above discussion leads to the proposed Algorithm~\ref{alg:estlefteig}.
    In the algorithm, we use the initial value $z_{ii}(0)=n^{\kappa}$ with $\kappa\geq 0$ instead of the original $z_{ii}(0)=1$. One reason is that convergence of the original $\bar{z}_{ii}$ to $\omega_i$ is equivalent to its convergence to $n^{\kappa}\omega_i$ in the new scale. Introducing $n^{\kappa}$ into the initial values does not affect the convergence.

\renewcommand{\algorithmicrequire}{ \textbf{Input:}} 
\renewcommand{\algorithmicensure}{ \textbf{Output:}} 
\begin{algorithm}[!h]
	\caption{Distributed estimation of  $\boldsymbol{\omega}$ at node $i$}\label{alg:estlefteig}
\begin{algorithmic}[1]
	\REQUIRE $\alpha$,  $n$, $\kappa$, $a_{ij}$, $k_0$.
	\ENSURE $\bar{\mathbf{z}}_i/n^{\kappa}$.
	\STATE \textbf{Initialization:} \label{alg:initialz}
	$z_{ii}(0)=n^{\kappa}$, $z_{ij}(0)=0$, $\forall j\neq i$.
	\STATE Receive data from neighbors: $\mathcal{Q}(\mathbf{z}_j(t))$, $j\in \mathcal{N}_i$.
	\STATE Update the estimate of $\boldsymbol{\omega}$ via \eqref{eq:estimateeigenvec}.
	\IF{$t\geq  k_0$}
	\STATE $K\triangleq t-k_0$.
	\STATE Update the average $\bar{\mathbf{z}}_i(K)$:\\[1mm] \label{alg:updatez_i}
	$\bar{\mathbf{z}}_i(K+1)=\frac{K}{K+1} \bar{\mathbf{z}}_i(K)+\frac{1}{K+1} \mathbf{z}_i(t)$. 
	\ENDIF
\end{algorithmic}
\end{algorithm}

\subsection{Design of the correction term $\boldsymbol{\epsilon}(t)$} 
\label{sub:algorithm_description}
	The second stage is concerned with the design of an appropriate correction term $\epsilon_i(t)$ in \eqref{eq:algorithm} for each $i$ to compensate for the unidirectional effect of directed communication links.

	As discussed previously, the steady state of the algorithm \eqref{eq:algorithm} is closely related with the left eigenvector $\boldsymbol{\omega}$ of the Laplacian $\mathbf{L}$. Now assume that $\boldsymbol{\omega}$ is available at each node, then the nodes can adjust their initial values as  $x_i(t_0)=y_i+\zeta_i, \ \forall i\in\{1,\dots,n\}$
	with $\zeta_i=y_i[1/(n\omega_i)-1]$ so that $\boldsymbol{\omega}^T \mathbf{x}(t_0)=\hat{\theta}$, and thus $\hat{\theta}$ can be asymptotically achieved at all nodes. However, Algorithm~\ref{alg:estlefteig} can only produce an asymptotic estimate of $\boldsymbol{\omega}$ (see Theorems~\ref{thm:MSconvergence} and \ref{thm:eZas}).
	A possible alternative is to perform the tuning via $\epsilon_i(t)$ in an iterative manner upon the estimate $\bar{\mathbf{z}}_i(K)$ of $\boldsymbol{\omega}$ is obtained so that $\boldsymbol{\omega}^T \mathbf{x}(t)\to \hat{\theta}$ as $t\to \infty$. 

    To design an appropriate form, we note that $\boldsymbol{\omega}^T \mathbf{L}=\mathbf{0}$, which implies $\boldsymbol{\omega}^T \mathbf{x}(t+1)=\boldsymbol{\omega}^T (\mathbf{x}(t)+\boldsymbol{\epsilon}(t))$ in view of \eqref{eq:algorithm}, where $\mathbf{x}(t)$ and $\boldsymbol{\epsilon}(t)$ are the stacked vectors of $x_i(t)$ and $\epsilon_i(t)$, respectively. This implies that 
\[
	\boldsymbol{\omega}^T \mathbf{x}(t)=\boldsymbol{\omega}^T \left(\mathbf{y}+\sum_{s=t_0}^{t-1} \boldsymbol{\epsilon}(s)\right).
\]
    In order to guarantee the convergence of $\sum_{s=t_0}^{t-1} \boldsymbol{\epsilon}(s)$, one simple choice of $\boldsymbol{\epsilon}(t)$ is to make $\sum_{s=t_0}^{t-1} \boldsymbol{\epsilon}(s)$ a telescoping series. For instance, we can design $\epsilon_i(t)$ as  follows
\begin{equation}\label{eq:correction}
	\epsilon_i(t)\triangleq \begin{cases}
    \bigl[\frac{1}{n\bar{z}_{ii}(t_0+1)}-1\bigr]y_i, &t=t_0,\\
	\bigl[\frac{1}{n\bar{z}_{ii}(t+1)}- \frac{1}{n\bar{z}_{ii}(t)}\bigr]y_i, &t\in \mathbb{Z}_{\geq t_0+1}.
	\end{cases}
\end{equation}
    In this way, substituting \eqref{eq:correction} into \eqref{eq:algorithm} yields
\begin{equation}
	\boldsymbol{\omega}^T \mathbf{x}(t)=\frac{1}{n}\left[\frac{\omega_1}{\bar{z}_{11}(t)},\dots,\frac{\omega_n}{\bar{z}_{nn}(t)}\right]\mathbf{y},\label{eq:compensate}
\end{equation}
    which will asymptotically converge to $\hat{\theta}$ as $t\to \infty$ provided that the convergence of Algorithm~\ref{alg:estlefteig} is established. 

	One issue remaining before the implementation of \eqref{eq:correction} is the well-definedness of $\epsilon_i(t)$, $\forall i$, that is, the denominators in $\epsilon_i(t)$ must be nonzero with probability 1. This is much involved and we will elaborate on it in Section~\ref{sec:convergence_analysis} (see Theorem~\ref{thm:welldefined}). 

    The proposed algorithm of the $t$-th iteration run by node $i$ at the second stage is shown in Algorithm~\ref{alg:estimation}. Here, we modify the definition of $\epsilon_i(t)$ in \eqref{eq:correction}  to accommodate the setup in Algorithm~\ref{alg:estlefteig} (see lines \ref{alg:correctionintial} and \ref{alg:correction}). 
    Moreover, a running average step as in Algorithm~\ref{alg:estlefteig} is introduced aiming at removing noise propagation (see line \ref{alg:updatex_i}).

\begin{algorithm}[!h]
	\caption{Distributed estimation algorithm with quantized data via running average at node $i$}\label{alg:estimation}
\begin{algorithmic}[1]
	\REQUIRE $\alpha$,  $n$, $\kappa$, $a_{ij}$, $t_0$, $x_i(t_0)$, $\bar{z}_{ii}(t)$, $\bar{z}_{ii}(t+1)$.
	\ENSURE $\bar{x}_i$.
	\STATE \textbf{Initialization:} \label{alg:correctionintial}
	$\epsilon_i(t_0)=\bigl[\frac{n^{\kappa-1}}{\bar{z}_{ii}(t_0+1) }-1\bigr]y_i$. 
	\STATE Receive data from neighbors: $\mathcal{Q}(x_j(t)+\epsilon_j(t))$, $j\in \mathcal{N}_i$.
	\STATE Update the state $x_i(t)$ via \eqref{eq:algorithm}.
	\STATE Compute the correction: \label{alg:correction} \\[1mm] 
	$\epsilon_i(t+1)=n^{\kappa-1}y_i\frac{\bar{z}_{ii}(t+1)-\bar{z}_{ii}(t+2)}{\bar{z}_{ii}(t+1)\bar{z}_{ii}(t+2)}$. \label{alg:epsilon}\\[1mm] 
	\IF{$t\geq t_0$}
	\STATE $K\triangleq t-t_0$.
	\STATE Update the average $\bar{x}_i(K)$:\\[1mm] \label{alg:updatex_i}
	$\bar{x}_i(K+1)=\frac{K}{K+1} \bar{x}_i(K)+\frac{1}{K+1} x_i(t)$. \\[1mm]
	\ENDIF
\end{algorithmic}
\end{algorithm}

\subsection{Summary of the algorithm}
	At each iteration, the proposed distributed estimation algorithm with quantized data  is composed of Algorithm~\ref{alg:estlefteig} and Algorithm~\ref{alg:estimation}. In the algorithm, we use an increasing window size $t-k_0$ (resp. $t-t_0$)  for the averaging process. A fixed window size $K$ can also be adopted according to what level of the convergence performance is needed. This can be inferred from the theoretical results in Section~\ref{sec:convergence_analysis}.

	We remark that the adjustment of the initial values in line~\ref{alg:initialz} of Algorithm~\ref{alg:estlefteig} has another consequence. It is known that $0<\omega_i<1$, $\forall i$, by Lemma~\ref{lem:matrixproperty} and some $\omega_i$'s are rather close to 0 for certain topologies. It is then probable that zeros would occur in the denominators of $\epsilon_i(t)$ during the quantization process, which makes the implementation of \eqref{eq:correction} meaningless. Increasing the initial values from 1 to $n^{\kappa}$ is meant to tackle this concern. Our simulation results validate this consideration.

 	We also emphasize that no further buffer is needed to store the previous states $\bar{\mathbf{z}}_i(K)$ and $\bar{x}_i(K)$ (see line \ref{alg:updatez_i} of Algorithm~\ref{alg:estlefteig} and line \ref{alg:updatex_i} of Algorithm~\ref{alg:estimation} for their recursive implementations). Further, the starting points $k_0,t_0$ contributes little to the rate of convergence of the algorithm in the long run.
 	But they do have an effect on the transient behaviors at the first few steps if  not appropriately designated.

    Finally, in order to deal with directed communication links, we introduce the left eigenvector estimation stage (Stage 1). However, there is no free lunch. The price we have to pay for the generality and performance of the algorithm is the increasing memory size at Stage 1, which is of the order $O(n)$. This limits its scalability  for large-scale sensor networks. A more efficient algorithm deserves further investigation. 

\section{Convergence analysis of the proposed averaging based algorithm} 
\label{sec:convergence_analysis}
	In this section, we first present the convergence  results for  the estimation algorithm of the left eigenvector $\boldsymbol{\omega}$, based on which the convergence analysis of the proposed averaging based algorithm is given. For notational simplicity, we assume that $k_0=t_0=0$ for the subsequent analysis.

\subsection{Convergence analysis of Algorithm~\ref{alg:estlefteig}}

	Write $\mathcal{Q}(\mathbf{z}_i(t))=\mathbf{z}_i(t)+\mathbf{u}_i(t)$, where $\mathbf{u}_i(t)$ is the quantization error with zero mean and $\mathbb{E}\{\|\mathbf{u}_i(t)\|^2\}\leq n\Delta^2/4$ in view of \eqref{eq:varianncequantierror}.  Let $\mathbf{U}(t)\triangleq [\mathbf{u}_1(t),\mathbf{u}_2(t),\dots, \mathbf{u}_n(t)]^T$, then we can write \eqref{eq:estimateeigenvec} in a compact form $\mathbf{Z}(t+1)=\mathbf{P} \mathbf{Z}(t)-\alpha \mathbf{L}\mathbf{U}(t)$ with $\mathbf{Z}(0)=\mathbf{I}$.
	Hence it can be derived  that
\begin{equation}\label{eq:barZ}
	\bar{\mathbf{Z}}(K)=\frac{1}{K}\sum_{k=1}^{K}\left(\mathbf{P}^k-\alpha\sum_{s=0}^{k-1}\mathbf{P}^{k-s-1} \mathbf{L} \mathbf{U}(s)\right).
\end{equation}

	Define the estimation error as $\mathbf{e}_{\bar{\mathbf{Z}}}(K)\triangleq \bar{\mathbf{Z}}(K)-\mathbf{1} \boldsymbol{\omega}^T$. Recall that $\mathbf{L} \mathbf{1}=\boldsymbol{\omega}^T \mathbf{L}=\mathbf{0}$, it is easy to verify that $\mathbf{P}^k-\mathbf{1}\boldsymbol{\omega}^T=\mathbf{Q}^k$ and $\mathbf{P}^k \mathbf{L}=\mathbf{Q}^k \mathbf{L}$, $\forall k\in \mathbb{Z}_{\geq 1}$. This together with \eqref{eq:barZ} implies
\begin{equation}\label{eq:eZ}
	\mathbf{e}_{\bar{\mathbf{Z}}}(K)=\frac{1}{K}\sum_{k=1}^{K} \left(\mathbf{Q}^k-\alpha
	\sum_{s=0}^{k-1}\mathbf{Q}^{k-s-1} \mathbf{L} \mathbf{U}(s)\right).
\end{equation}
	By Lemma \ref{lem:parameter}, we have $\rho(\mathbf{Q})<1$, which implies that $\mathbf{I}-\mathbf{Q}$ is nonsingular. 
	Moreover, by interchanging the order of summation, we can obtain
\[
	\sum_{k=1}^{K}\sum_{s=0}^{k-1}\mathbf{Q}^{k-s-1} \mathbf{L} \mathbf{U}(s)=\sum_{k=0}^{K-1}\sum_{s=0}^{K-k-1} \mathbf{Q}^s \mathbf{L}\mathbf{U}(k).
\]
	It thus follows from \eqref{eq:eZ} that
\begin{equation}\label{eq:eZeigenvec}
	\mathbf{e}_{\bar{\mathbf{Z}}}(K)=\frac{1}{K}\tilde{\mathbf{Q}}(\mathbf{I}-\mathbf{Q}^K)-\frac{\alpha}{K}\sum_{k=0}^{K-1}\mathbf{W}_K(k)\tilde{\mathbf{L}}\mathbf{U}(k),
\end{equation}
	where $\tilde{\mathbf{Q}}\triangleq (\mathbf{I}-\mathbf{Q})^{-1}\mathbf{Q}$, $\tilde{\mathbf{L}}\triangleq (\mathbf{I}-\mathbf{Q})^{-1}\mathbf{L}$, and $\mathbf{W}_K(k)\triangleq \mathbf{I}-\mathbf{Q}^{K-k}$, for $0\leq k\leq K-1$.

\subsubsection{Mean square performance} 
	Let $\mathbf{D}(t)\triangleq \mathbb{E}\{\mathbf{U}(t)\mathbf{U}^T(t)\}$. By Assumption~2, we can decompose it as $\mathbf{D}(t)=\mathbf{F}(t)^2$, where $\mathbf{F}(t)\triangleq \text{diag}\bigl\{\sqrt{\mathbb{E}\{\|\mathbf{u}_1(t)\|^2\}}, \dots,\sqrt{\mathbb{E}\{\|\mathbf{u}_n(t)\|^2\}}\bigr\}$. 
	Invoking \eqref{eq:eZeigenvec} and Assumption 2 on $\{\mathbf{U}(t)\}_{t\geq 0}$ implies 
\begin{equation}\label{eq:eZmeansquarebound}
\begin{split}
	\mathbb{E}\bigl\{\|\mathbf{e}_{\bar{\mathbf{Z}}}(K)\|_F^2\bigr\}&=\frac{1}{K^2}\bigl\|\tilde{\mathbf{Q}}(\mathbf{I}-\mathbf{Q}^K)\bigr\|_F^2\\
	&\relphantom{=}{}+\frac{\alpha^2}{K^2}\sum_{k=0}^{K-1} \bigl\|\mathbf{W}_K(k)\tilde{\mathbf{L}}\mathbf{F}(k)\bigr\|_F^2.
\end{split}
\end{equation}

	We have the following result regarding the mean square convergence of $\mathbf{e}_{\bar{\mathbf{Z}}}(K)$.
\begin{theorem}\label{thm:MSconvergence}
	Under Assumptions 1 and 2, $\bar{\mathbf{Z}}(K)$ converges in mean square to $\mathbf{1}\boldsymbol{\omega}^T$ as $K\to \infty$. Moreover, for large $K$, the mean square deviation is approximately given by 
\begin{equation}\label{eq:MSdeviation}
	\mathbb{E}\left\{\|\mathbf{e}_{\bar{\mathbf{Z}}}(K)\|_F^2\right\}\leq \frac{n\nu^2}{4} \frac{1}{K},
\end{equation}
	where $\nu\triangleq \alpha\sqrt{n+2}\Delta\|\tilde{\mathbf{L}}\|_2$.
\end{theorem}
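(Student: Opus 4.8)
The plan is to start directly from the exact mean square expression \eqref{eq:eZmeansquarebound}, which already decomposes the error into a deterministic bias term and a quantization-noise term, and to show that the bias term decays at the faster rate $O(1/K^2)$ while the noise term carries the announced leading behaviour $\frac{n\nu^2}{4K}$.

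First I would dispose of the bias term $\frac{1}{K^2}\|\tilde{\mathbf{Q}}(\mathbf{I}-\mathbf{Q}^K)\|_F^2$. Since $\tilde{\mathbf{Q}}=(\mathbf{I}-\mathbf{Q})^{-1}\mathbf{Q}$ is a fixed matrix, well-defined because $\rho(\mathbf{Q})<1$ by Lemma~\ref{lem:parameter}, submultiplicativity gives $\|\tilde{\mathbf{Q}}(\mathbf{I}-\mathbf{Q}^K)\|_F\leq \|\tilde{\mathbf{Q}}\|_2\|\mathbf{I}-\mathbf{Q}^K\|_F\leq \|\tilde{\mathbf{Q}}\|_2\,c_{\mathbf{Q},n}$, using the uniform bound of Lemma~\ref{lem:parameter}. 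Hence this term is at most a constant times $1/K^2$ and is negligible compared with $1/K$.

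The heart of the argument is the noise term. For each summand I would apply the three-factor inequality $\|\mathbf{W}_K(k)\tilde{\mathbf{L}}\mathbf{F}(k)\|_F\leq \|\mathbf{W}_K(k)\|_F\|\tilde{\mathbf{L}}\|_2\|\mathbf{F}(k)\|_2$, bound $\|\mathbf{F}(k)\|_2=\max_i\sqrt{\mathbb{E}\{\|\mathbf{u}_i(k)\|^2\}}\leq \sqrt{n}\,\Delta/2$ via \eqref{eq:varianncequantierror}, and invoke Lemma~\ref{lem:parameter} on $\mathbf{W}_K(k)=\mathbf{I}-\mathbf{Q}^{K-k}$ to write $\|\mathbf{W}_K(k)\|_F^2\leq n+2+n^2 c_\mathbf{Q}^2(K-k)^{2(q-1)}\rho^{2(K-k)}(\mathbf{Q})$. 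Substituting $j=K-k$, the sum over $k=0,\dots,K-1$ splits into a dominant part $(n+2)K$ and a tail $n^2 c_\mathbf{Q}^2\sum_{j=1}^{K}j^{2(q-1)}\rho^{2j}(\mathbf{Q})$, which converges to a finite constant because $\rho(\mathbf{Q})<1$ makes the series $\sum_j j^{2(q-1)}\rho^{2j}$ summable. Collecting the factors $\alpha^2$, $\|\tilde{\mathbf{L}}\|_2^2$, $n\Delta^2/4$ and $(n+2)$ reproduces exactly $\nu^2=\alpha^2(n+2)\Delta^2\|\tilde{\mathbf{L}}\|_2^2$ and the leading coefficient $\frac{n\nu^2}{4}$, while the convergent tail contributes only $O(1/K^2)$.

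Combining the two estimates yields $\mathbb{E}\{\|\mathbf{e}_{\bar{\mathbf{Z}}}(K)\|_F^2\}\leq \frac{n\nu^2}{4K}+O(1/K^2)$, which simultaneously establishes mean square convergence to $\mathbf{1}\boldsymbol{\omega}^T$ (the bound tends to $0$) and gives the announced rate. The main obstacle I anticipate is organizing the $\mathbf{W}_K(k)$ sum cleanly: the factors near $k=K$, where $\mathbf{Q}^{K-k}$ is not small and $\mathbf{W}_K(k)$ deviates most from $\mathbf{I}$, must be shown to contribute only an $O(1)$ total so that they are absorbed into the $O(1/K^2)$ remainder rather than corrupting the leading constant; this is precisely what the geometric-polynomial summability provides. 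Choosing the Frobenius/spectral split of the three-matrix product so that Lemma~\ref{lem:parameter}'s $\sqrt{n+2}$ and \eqref{eq:varianncequantierror}'s $\sqrt{n}\,\Delta/2$ land on the correct factors is what makes the constant come out to exactly $\frac{n\nu^2}{4}$.
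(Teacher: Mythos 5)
Your proposal is correct and follows essentially the same route as the paper's proof in Appendix~C: both start from \eqref{eq:eZmeansquarebound}, bound the deterministic term by $\|\tilde{\mathbf{Q}}\|_2^2 c_{\mathbf{Q},n}^2/K^2$, and handle the noise term via the split $\|\mathbf{W}_K(k)\tilde{\mathbf{L}}\mathbf{F}(k)\|_F\leq\|\mathbf{W}_K(k)\|_F\|\tilde{\mathbf{L}}\|_2\|\mathbf{F}(k)\|_2$ with $\|\mathbf{F}(k)\|_2\leq\sqrt{n}\Delta/2$, Lemma~\ref{lem:parameter}, and the summability of $\sum_k k^{2(q-1)}\rho^{2k}(\mathbf{Q})$ to isolate the leading $(n+2)K$ contribution. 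The constants land exactly on $n\nu^2/4$, so nothing further is needed.
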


\begin{proof}
	See Appendix~\ref{app:MSconvergence}.
\end{proof}

	Theorem~\ref{thm:MSconvergence} demonstrates that the averaging based method has a universal convergence rate of $O(K^{-1})$, independent of the network topology. This is a distinctive feature of the proposed algorithm from the standard consensus algorithm \cite{RenBea05}. The possible effect of the network topology only lies in the rate coefficient $\lim_{K\to\infty} K \mathbb{E}\{\|\mathbf{e}_{\bar{\mathbf{Z}}}(K)\|_F^2\}$.
	In fact, the upper bound \eqref{eq:MSdeviation} gives a rough estimate of the rate coefficient, i.e., $ n(n+2) \alpha^2\Delta^2\|(\alpha \mathbf{L}+\mathbf{1}\boldsymbol{\omega}^T)^{-1}\mathbf{L}\|_2^2 /4$, 
	which depends on the parameter $\alpha$, the network topology through $n$, $\mathbf{L}$ and $\boldsymbol{\omega}$,  and the quantization scheme through $\Delta$. We note that a similar form of the rate of convergence is established  in \cite{FangLi10}.

\subsubsection{Almost sure performance} 
	It follows from Lemma~\ref{lem:parameter} that 
	$\|\tilde{\mathbf{Q}}(\mathbf{I}-\mathbf{Q}^t)\|_F\leq \|\tilde{\mathbf{Q}}\|_2\|\mathbf{I}-\mathbf{Q}^t\|_F \leq \|\tilde{\mathbf{Q}}\|_2(\sqrt{n+2}+nc_\mathbf{Q} t^{q-1}\rho^{t}(\mathbf{Q}))$. This together with \eqref{eq:eZeigenvec} gives 
\begin{multline}\label{eq:eZasbound}
	\|\mathbf{e}_{\bar{\mathbf{Z}}}(K)\|_F\leq \frac{\sqrt{n+2}\|\tilde{\mathbf{Q}}\|_2}{K}+ nc_\mathbf{Q} \|\tilde{\mathbf{Q}}\|_2K^{q-2}\rho^{K}(\mathbf{Q})\\+\frac{\alpha}{K}\left\|\sum_{k=0}^{K-1}\mathbf{W}_K(k)\tilde{\mathbf{L}}\mathbf{U}(k)\right\|_F.
\end{multline}
	Obviously, the first two terms of the RHS of \eqref{eq:eZasbound} tend to zero as $K\to \infty$. The third term is in the form of weighted sum of random matrices. The law of the iterated logarithm for independent random variables \cite[Chap.8]{Gut13} motivates us to provide similar quantitative bounds on the rate of convergence of the third term. To this end, we define
\begin{equation}\label{eq:rU}
    r_{K}^\mathbf{U}\triangleq \max_{i} \lambda_{\max}\left(\sum_{k=0}^{K-1} \text{Cov}(\mathbf{u}_i(k))\right).
\end{equation}

\begin{theorem}\label{thm:eZas}
	Under Assumptions 1 and 2, for all large $K$,

	i) if $\sup_{K\geq 1}r_K^\mathbf{U}<\infty$, then there exists a constant $c_\mathbf{U}>0$ such that $\max_{i}\|\sum_{k=0}^{K} \mathbf{u}_i(k)\|\leq c_\mathbf{U}$ a.s. and 
\begin{equation}\label{eq:asboundboundednoise}
	\|\mathbf{e}_{\bar{\mathbf{Z}}}(K)\|_F\leq \mu\frac{1}{K}\ \ \mbox{a.s.},
\end{equation}
	where $\mu\triangleq \sqrt{n+2}\|\tilde{\mathbf{Q}}\|_2+\alpha n (nc_\mathbf{Q}c_{\mathbf{Q}}'\Delta +c_\mathbf{U})\|\tilde{\mathbf{L}}\|_2 $, and
\[
	c_{\mathbf{Q}}'\triangleq \begin{cases}
    \frac{\rho(\mathbf{Q})}{1-\rho(\mathbf{Q})}, &q=1,\\
	(\frac{1-q}{e \log \rho(\mathbf{Q})})^{q-1}
	+\sum_{j=0}^{q-1} \frac{(q-1)! \rho(\mathbf{Q})}{j!(-\log \rho(\mathbf{Q}))^{q-j}}, &q> 1.
	\end{cases}
\]

	ii) if $\lim_{K\to \infty} r_K^\mathbf{U}=\infty$, then 
\begin{equation}\label{eq:asboundboundednoise2}
	\|\mathbf{e}_{\bar{\mathbf{Z}}}(K)\|_F\leq \alpha n \|\tilde{\mathbf{L}}\|_2 \frac{\sqrt{2 r_K^\mathbf{U} \log\log r_K^\mathbf{U}}}{K}\ \ \mbox{a.s.}
\end{equation}
\end{theorem}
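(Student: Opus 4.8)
The plan is to argue from the decomposition \eqref{eq:eZasbound}. Its first two terms are deterministic, the first already being $O(1/K)$ and the second, of order $K^{q-2}\rho^K(\mathbf{Q})$, being $o(1/K)$, so the whole argument reduces to controlling the random weighted sum $\frac{\alpha}{K}\|\sum_{k=0}^{K-1}\mathbf{W}_K(k)\tilde{\mathbf{L}}\mathbf{U}(k)\|_F$ almost surely. Since $\mathbf{W}_K(k)=\mathbf{I}-\mathbf{Q}^{K-k}$, I would first split
\[
\sum_{k=0}^{K-1}\mathbf{W}_K(k)\tilde{\mathbf{L}}\mathbf{U}(k)=\tilde{\mathbf{L}}\sum_{k=0}^{K-1}\mathbf{U}(k)-\sum_{k=0}^{K-1}\mathbf{Q}^{K-k}\tilde{\mathbf{L}}\mathbf{U}(k),
\]
isolating a flat (unweighted) partial sum of the independent zero-mean errors from a geometrically weighted remainder, which are then treated by different probabilistic tools.

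The geometric remainder is handled deterministically for every $K$. Using the entrywise bound $\|\mathbf{U}(k)\|_F\leq n\Delta$ from \eqref{eq:boundquantierror}, the Frobenius-norm estimate $\|\mathbf{Q}^{K-k}\|_F\leq nc_\mathbf{Q}(K-k)^{q-1}\rho^{K-k}(\mathbf{Q})$ from Lemma~\ref{lem:matrixproperty}, and the convergence of the series $\sum_{j\geq 1}j^{q-1}\rho^j(\mathbf{Q})\leq c_\mathbf{Q}'$, I would obtain the uniform bound $\|\sum_{k=0}^{K-1}\mathbf{Q}^{K-k}\tilde{\mathbf{L}}\mathbf{U}(k)\|_F\leq n^2 c_\mathbf{Q}c_\mathbf{Q}'\Delta\|\tilde{\mathbf{L}}\|_2$. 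After division by $K$ this contributes exactly the $\alpha n\cdot nc_\mathbf{Q}c_\mathbf{Q}'\Delta\|\tilde{\mathbf{L}}\|_2/K$ piece of $\mu$ in part i), and in part ii) it is negligible against the iterated-logarithm scale.

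For part i), $\sup_K r_K^\mathbf{U}<\infty$ forces the cumulative covariance of each node's error sequence to have uniformly bounded maximal eigenvalue, hence bounded trace, so $\sum_k\mathbb{E}\{\|\mathbf{u}_i(k)\|^2\}<\infty$; Kolmogorov's convergence theorem applied componentwise to the independent zero-mean $\mathbf{u}_i(k)$ shows $\sum_k\mathbf{u}_i(k)$ converges a.s., whence its partial sums are a.s. bounded, giving $\max_i\|\sum_{k=0}^{K}\mathbf{u}_i(k)\|\leq c_\mathbf{U}$. Bounding $\|\tilde{\mathbf{L}}\sum_{k=0}^{K-1}\mathbf{U}(k)\|_F\leq n c_\mathbf{U}\|\tilde{\mathbf{L}}\|_2$ and collecting all contributions yields $\|\mathbf{e}_{\bar{\mathbf{Z}}}(K)\|_F\leq\mu/K$ a.s. For part ii), when $r_K^\mathbf{U}\to\infty$ I would instead invoke the law of the iterated logarithm for independent, not necessarily identically distributed summands: the increments $u_{i,l}(k)$ are uniformly bounded by $\Delta$ while $r_K^\mathbf{U}\to\infty$, so Kolmogorov's condition holds and, applied entrywise with each component variance dominated by $r_K^\mathbf{U}$ through its definition \eqref{eq:rU}, it gives $|\sum_k u_{i,l}(k)|\leq\sqrt{2r_K^\mathbf{U}\log\log r_K^\mathbf{U}}$ eventually a.s. Summing the $n^2$ entries produces $\|\sum_{k=0}^{K-1}\mathbf{U}(k)\|_F\leq n\sqrt{2r_K^\mathbf{U}\log\log r_K^\mathbf{U}}$ a.s., which dominates the deterministic $O(1/K)$ terms and the geometric remainder and hence gives \eqref{eq:asboundboundednoise2}.

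The main obstacle I anticipate is this last step, converting the scalar iterated-logarithm bound into the Frobenius-norm statement. One must justify the entrywise application despite non-identical distributions by verifying Kolmogorov's boundedness condition against the growth of $r_K^\mathbf{U}$, treat separately any component whose variance stays bounded (its sum converges and is dominated by the diverging LIL scale), and replace each true component variance by the surrogate $r_K^\mathbf{U}$ using the monotonicity of $x\mapsto\sqrt{2x\log\log x}$, all while tracking the dimension-dependent constants so that they collapse precisely into the factor $n$ and the constant $\mu$.
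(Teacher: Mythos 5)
Your proposal is correct and follows the same skeleton as the paper's proof: start from \eqref{eq:eZasbound}, split the random term via $\mathbf{W}_K(k)=\mathbf{I}-\mathbf{Q}^{K-k}$ into a flat partial sum of the $\mathbf{U}(k)$ plus a geometrically weighted remainder, bound the remainder deterministically by $n^2c_\mathbf{Q}c_\mathbf{Q}'\Delta\|\tilde{\mathbf{L}}\|_2$, and handle part i) by Kolmogorov's convergence/three-series theorem exactly as in Appendix D; your constants collapse to the stated $\mu$. The one genuine difference is in part ii): the paper applies a single vector-valued law of the iterated logarithm (Theorem 1.1 of \cite{Chen93}, for independent Banach-space-valued summands) to each node's sequence $\{\mathbf{u}_{i}(k)\}_k$, whereas you apply the classical scalar Kolmogorov LIL entrywise and then reassemble the Frobenius norm. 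Your route is more elementary and avoids citing the Banach-space result, at the cost of the componentwise bookkeeping you correctly identify: separating entries whose variance sum stays bounded, checking Kolmogorov's boundedness condition via $|u_{i,l}(k)|\leq\Delta$, and dominating each diagonal entry of $\sum_k\text{Cov}(\mathbf{u}_i(k))$ by $r_K^\mathbf{U}$ so that monotonicity of $x\mapsto\sqrt{2x\log\log x}$ yields the common scale; this lands on the same factor $n$ and hence the same constant $\alpha n\|\tilde{\mathbf{L}}\|_2$. The only step you take on faith is the explicit numerical bound $\sum_{j\geq 1}j^{q-1}\rho^j(\mathbf{Q})\leq c_\mathbf{Q}'$ with the particular $c_\mathbf{Q}'$ appearing in the statement; the paper proves this separately (Lemma D.1) by the monotonicity of $t^{q-1}\rho^t(\mathbf{Q})$ about $T_*=(1-q)/\log\rho(\mathbf{Q})$, an integral comparison, and repeated integration by parts, and you would need to reproduce that computation to justify the exact constant rather than mere convergence of the series.
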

\begin{proof}
	See Appendix~\ref{app:eZas}.
\end{proof}

	By \eqref{eq:rU}, we can deduce that  
\begin{align*}
	r_K^\mathbf{U}&\leq \max_i \sum_{k=0}^{K-1} \lambda_{\max}(\text{Cov}(\mathbf{u}_i(k)))\\
	&\leq \max_i \sum_{k=0}^{K-1} \mathbb{E}\{\|\mathbf{u}_i(k)\|^2\}\leq \frac{n\Delta^2}{4} K,
\end{align*}
    where the second step follows from the relation that $\lambda_{\max}(\text{Cov}(\mathbf{u}_i(k)))\leq \mathbb{E}\{\|\mathbf{u}_i(k)\|^2\}$, and the last inequality is a direct consequence of \eqref{eq:boundquantierror}. Note that $\log K=o(\sqrt{K})$, Theorem~\ref{thm:eZas} thus reveals that $\lim_{K\to \infty}\mathbf{e}_{\bar{\mathbf{Z}}}(K)= 0$ a.s.. This means that the left eigenvector $\boldsymbol{\omega}$ can be asymptotically obtained at each node by using the running average technique, which establishes the convergence property of Algorithm~\ref{alg:estlefteig} in the almost sure sense.

	Theorem~\ref{thm:eZas} has another important implication. Actually, we have 
\begin{equation}\label{eq:barz-wi_errineq}
	\sum_{i=1}^n |\bar{z}_{ii}(K)-\omega_i|^2\leq \|\mathbf{e}_{\bar{\mathbf{Z}}}(K)\|_F^2,  \ \forall K\in \mathbb{Z}_{\geq 0}.
\end{equation}
	Hence, by Theorem~\ref{thm:eZas}, $\lim_{K\to \infty}\bar{z}_{ii}(K)=\omega_i$~a.s.. Moreover, by Lemma~\ref{lem:matrixproperty}, we know that  $\min_{i}\omega_i>0$.
	Thus, for all large $t$, one has $\bar{z}_{ii}(t)\geq \eta \omega_i$ a.s., $\forall i$. The above discussion leads to the following theorem.
\begin{theorem}\label{thm:welldefined}
	Let Assumptions 1 and 2 hold, then for any constant $0<\eta<1$, there exists $t_\eta\in \mathbb{Z}_{\geq 0}$ such that 
\begin{equation}\label{eq:barziibound}
	\min_i \frac{\bar{z}_{ii}(t)}{w_i} \geq \eta  \  \ \text{a.s.} ,\  \forall t\in \mathbb{Z}_{\geq t_\eta}.
\end{equation}
\end{theorem}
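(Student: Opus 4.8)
\emph{Proof strategy.} The plan is to reduce the simultaneous, uniform-in-$i$ ratio bound \eqref{eq:barziibound} to the single scalar statement that the Frobenius error $\|\mathbf{e}_{\bar{\mathbf{Z}}}(K)\|_F$ vanishes almost surely. The two ingredients that make this reduction work are the componentwise inequality \eqref{eq:barz-wi_errineq}, which dominates every diagonal deviation $|\bar{z}_{ii}(K)-\omega_i|$ by the one quantity $\|\mathbf{e}_{\bar{\mathbf{Z}}}(K)\|_F$, and the strict positivity of the entries of $\boldsymbol{\omega}$ furnished by Lemma~\ref{lem:matrixproperty}, which guarantees a fixed positive gap to exploit.

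First I would invoke Theorem~\ref{thm:eZas}. In case i) we have $\|\mathbf{e}_{\bar{\mathbf{Z}}}(K)\|_F\leq \mu/K\to 0$ a.s.; in case ii) we have $\|\mathbf{e}_{\bar{\mathbf{Z}}}(K)\|_F\leq \alpha n\|\tilde{\mathbf{L}}\|_2\sqrt{2r_K^{\mathbf{U}}\log\log r_K^{\mathbf{U}}}/K$ a.s., and since $r_K^{\mathbf{U}}\leq n\Delta^2 K/4$ and $\log K=o(\sqrt{K})$, the right-hand side is $o(1)$. Hence in either case $\|\mathbf{e}_{\bar{\mathbf{Z}}}(K)\|_F\to 0$ almost surely.

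Next, fix $\eta\in(0,1)$ and set $\delta\triangleq (1-\eta)\min_i\omega_i$, which is strictly positive by Lemma~\ref{lem:matrixproperty}. By the almost sure convergence just established, on an event of probability one there exists a sample-path dependent index $t_\eta\in\mathbb{Z}_{\geq 0}$ such that $\|\mathbf{e}_{\bar{\mathbf{Z}}}(t)\|_F<\delta$ for every $t\geq t_\eta$. For such $t$, the componentwise inequality \eqref{eq:barz-wi_errineq} yields $|\bar{z}_{ii}(t)-\omega_i|\leq \|\mathbf{e}_{\bar{\mathbf{Z}}}(t)\|_F<\delta\leq (1-\eta)\omega_i$ for each $i$ simultaneously, whence $\bar{z}_{ii}(t)>\eta\omega_i$ and therefore $\min_i \bar{z}_{ii}(t)/\omega_i\geq\eta$, which is exactly \eqref{eq:barziibound}.

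The only genuinely delicate point is the interpretation of $t_\eta$: because the convergence is almost sure rather than uniform, $t_\eta$ is necessarily a random time (finite almost surely) rather than a deterministic constant, and the asserted inequality holds for $t\geq t_\eta$ on the probability-one event where the Frobenius error converges. What makes the argument clean, and why no uniformity across nodes needs to be argued separately, is that a single scalar threshold on $\|\mathbf{e}_{\bar{\mathbf{Z}}}(t)\|_F$ already dominates all $n$ diagonal deviations at once through \eqref{eq:barz-wi_errineq}; the finiteness of the node set together with the uniform positive lower bound $\min_i\omega_i>0$ then removes any need to take a maximum over diverging per-node times.
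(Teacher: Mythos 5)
Your proposal is correct and follows essentially the same route as the paper: invoke Theorem~\ref{thm:eZas} for the almost sure vanishing of $\|\mathbf{e}_{\bar{\mathbf{Z}}}(K)\|_F$, pass to the diagonal entries via \eqref{eq:barz-wi_errineq}, and use $\min_i\omega_i>0$ from Lemma~\ref{lem:matrixproperty} to obtain the uniform ratio bound. Your explicit choice of $\delta=(1-\eta)\min_i\omega_i$ and the remark that $t_\eta$ is a sample-path dependent (almost surely finite) time make the argument slightly more careful than the paper's informal sketch, but the substance is identical.
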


	Theorem~\ref{thm:welldefined} states that the correction term $\epsilon_i(t)$ in \eqref{eq:correction} is well-defined for large $t$. For the implementation of \eqref{eq:correction}, one may choose $t_0=t_\eta$ to trigger the estimation algorithm at the second stage. Actually, with the setup in Algorithm~\ref{alg:estlefteig}, it is possible to choose a much smaller $t_0$ (see the simulation results in Section~\ref{sec:simulation_results}). For clarity of presentation of the subsequent analysis,  we always assume that $\bar{z}_{ii}(t)\geq \eta w_i$, $\forall t\in \mathbb{Z}_{\geq t_0}$.

	\begin{table*}
		\caption{Upper bounds of $\|\mathbf{e}_{\bar{\mathbf{x}}}(K)\|$}
		\label{tab:upperbounderror}
		\begin{center}
			\begin{tabular}{l||l}
			\hline\hline
	        $\sup_K r_{K}^{\mathbf{v}}<\infty$, $\sup_K r_{K}^\mathbf{U}<\infty$ & $\sqrt{2}\mu (n\eta^2)^{-1}(2c_{\mathbf{Q},n}y'\|\tilde{\mathbf{Q}}\|_2+\sqrt{n}y'') \times K^{-1}\log K$\\
	        $\sup_K r_{K}^{\mathbf{v}}<\infty$, $r_{K}^\mathbf{U}\to \infty$ & $2\alpha \eta^{-2}(2c_{\mathbf{Q},n}y'\|\tilde{\mathbf{Q}}\|_2+\sqrt{n}y'')\|\tilde{\mathbf{L}}\|_2 \times K^{-1}\log K \sqrt{r_{K}^\mathbf{U} \log\log r_{K}^\mathbf{U}}$\\
	        $r_{K}^{\mathbf{v}}\to \infty$, $\sup_K r_{K}^\mathbf{U}<\infty$ &$\sqrt{2} (n\eta^2)^{-1}(\alpha n\eta^2\|\tilde{\mathbf{L}}\|_2+\mu (2c_{\mathbf{Q},n}y'\|\tilde{\mathbf{Q}}\|_2+\sqrt{n}y'')) \times K^{-1}\max\bigl\{\sqrt{r_K^{\mathbf{v}} \log\log r_K^{\mathbf{v}}},\ \log K\bigr\}$\\
	        $r_{K}^{\mathbf{v}}\to \infty$, $r_{K}^\mathbf{U}\to \infty$& $\sqrt{2}\alpha \eta^{-2}(1+\sqrt{2}(2c_{\mathbf{Q},n}y'\|\tilde{\mathbf{Q}}\|_2+\sqrt{n}y''))\|\tilde{\mathbf{L}}\|_2 \times K^{-1}\max\Bigl\{\sqrt{r_K^{\mathbf{v}} \log\log r_K^{\mathbf{v}}}, \ \sqrt{r_{K}^\mathbf{U} \log\log r_{K}^\mathbf{U}} \log K\Bigr\}$\\
			\hline\hline
			\end{tabular}
		\end{center}
	\end{table*}
\subsection{Convergence analysis of Algorithm~\ref{alg:estimation} } 
\label{sub:convergence_analysis}

	Write $\mathcal{\mathbf{Q}}(x_i(t))=x_i(t)+v_i(t) $, $\forall i$, where $v_i(t)$ is the quantization error satisfying \eqref{eq:varianncequantierror} and \eqref{eq:boundquantierror}. Stack $x_i(t) $, $\epsilon_i(t) $ and $v_i(t) $ into column vectors $\mathbf{x}(t)$, $\boldsymbol{\epsilon}(t)$ and $\mathbf{v}(t)$, respectively,  then we can rewrite \eqref{eq:algorithm} more compactly into
\begin{equation}\label{eq:compactalgorithm}
	\mathbf{x}(t+1)=\mathbf{P}(\mathbf{x}(t)+\boldsymbol{\epsilon}(t))-\alpha \mathbf{L}\mathbf{v}(t).
\end{equation}
	Hence the running average $\bar{\mathbf{x}}(K)$ of \eqref{eq:arithmeticmean} can be expressed as
\begin{equation}\label{eq:averagecompact}
	\begin{split}
	\bar{\mathbf{x}}(K)&=\frac{1}{K} \sum_{k=1}^{K}  \left(\mathbf{P}^k \mathbf{y}+\sum_{s=0}^{k-1}\mathbf{P}^{k-s-1}(\mathbf{P}\boldsymbol{\epsilon}(s)-\alpha \mathbf{L} \mathbf{v}(s))\right). 
	\end{split}
\end{equation}

	The next lemma provides the convergence properties of the correction term $\boldsymbol{\epsilon}(t)$ of \eqref{eq:correction}.

\begin{lemma}\label{lem:correctionconvergence}
	Let  Assumptions 1 and 2 hold, then  for each $0<\eta<1$, we have for all large $t$,
\[
	\mathbb{E}\{\|\boldsymbol{\epsilon}(t)\|^2\} \leq \frac{\nu^2 y'^2}{n\eta^4} \frac{1}{t},
\]
	and almost surely
\[
	\|\boldsymbol{\epsilon}(t)\| \leq \begin{cases}
	\frac{2 \sqrt{2}\mu y'}{n\eta^2}\frac{1}{t}, &\sup r_K^\mathbf{U}<\infty,\\
	\frac{4\alpha y'\|\tilde{\mathbf{L}}\|_2}{\eta^2}\frac{\sqrt{r_{t+1}^\mathbf{U} \log\log r_{t+1}^\mathbf{U}} }{t}, & \text{otherwise},
	\end{cases}
\]
	where $y'\triangleq \max_{i} \omega_i^{-2}|y_i|$.
\end{lemma}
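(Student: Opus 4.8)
The plan is to reduce everything to controlling the consecutive increment of the diagonal entries of the running-average error matrix $\mathbf{e}_{\bar{\mathbf{Z}}}$, whose rates are already supplied by Theorems~\ref{thm:MSconvergence}--\ref{thm:eZas}, while using Theorem~\ref{thm:welldefined} to keep the denominators of \eqref{eq:correction} bounded away from zero.

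First I restrict to $t\ge t_0+1$ (legitimate, since the claim is only for all large $t$), where the second branch of \eqref{eq:correction} applies, and rewrite it over a common denominator as
\[
\epsilon_i(t)=\frac{y_i}{n}\,\frac{\bar{z}_{ii}(t)-\bar{z}_{ii}(t+1)}{\bar{z}_{ii}(t)\,\bar{z}_{ii}(t+1)}.
\]
By Theorem~\ref{thm:welldefined}, beyond some $t_\eta$ we have $\bar{z}_{ii}(t)\ge \eta\omega_i$ and $\bar{z}_{ii}(t+1)\ge \eta\omega_i$ almost surely, so the denominator is at least $\eta^2\omega_i^2>0$. This simultaneously settles well-definedness and yields the pointwise estimate $|\epsilon_i(t)|\le \frac{|y_i|}{n\eta^2\omega_i^2}\,|\bar{z}_{ii}(t)-\bar{z}_{ii}(t+1)|\le \frac{y'}{n\eta^2}\,|\bar{z}_{ii}(t)-\bar{z}_{ii}(t+1)|$ after invoking $y'=\max_i\omega_i^{-2}|y_i|$; squaring and summing over $i$ gives $\|\boldsymbol{\epsilon}(t)\|^2\le \frac{y'^2}{n^2\eta^4}\sum_{i}|\bar{z}_{ii}(t)-\bar{z}_{ii}(t+1)|^2$.

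Next I relate the numerator to the estimation error. Since $\mathbf{1}\boldsymbol{\omega}^T$ cancels in the difference, $\bar{z}_{ii}(t)-\bar{z}_{ii}(t+1)=(\mathbf{e}_{\bar{\mathbf{Z}}}(t))_{ii}-(\mathbf{e}_{\bar{\mathbf{Z}}}(t+1))_{ii}$, and because the sum of squared diagonal entries is dominated by the squared Frobenius norm,
\[
\sqrt{\sum_i|\bar{z}_{ii}(t)-\bar{z}_{ii}(t+1)|^2}\le \|\mathbf{e}_{\bar{\mathbf{Z}}}(t)-\mathbf{e}_{\bar{\mathbf{Z}}}(t+1)\|_F\le \|\mathbf{e}_{\bar{\mathbf{Z}}}(t)\|_F+\|\mathbf{e}_{\bar{\mathbf{Z}}}(t+1)\|_F.
\]
The three bounds then follow by substitution. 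For the mean-square estimate I use $(a+b)^2\le 2a^2+2b^2$ with \eqref{eq:MSdeviation} to obtain $\mathbb{E}\{\sum_i|\bar{z}_{ii}(t)-\bar{z}_{ii}(t+1)|^2\}\le 2\,\mathbb{E}\{\|\mathbf{e}_{\bar{\mathbf{Z}}}(t)\|_F^2\}+2\,\mathbb{E}\{\|\mathbf{e}_{\bar{\mathbf{Z}}}(t+1)\|_F^2\}\le n\nu^2/t$, which fed into the preceding display produces exactly $\mathbb{E}\{\|\boldsymbol{\epsilon}(t)\|^2\}\le \nu^2 y'^2/(n\eta^4 t)$. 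For the almost-sure estimates I substitute the two regimes of Theorem~\ref{thm:eZas}: in the bounded case $\|\mathbf{e}_{\bar{\mathbf{Z}}}(K)\|_F\le \mu/K$, so the sum of two consecutive norms is $\le 2\mu/t$; in the unbounded case $\|\mathbf{e}_{\bar{\mathbf{Z}}}(K)\|_F\le \alpha n\|\tilde{\mathbf{L}}\|_2\sqrt{2 r_K^\mathbf{U}\log\log r_K^\mathbf{U}}/K$, and using that $r_K^\mathbf{U}$ is nondecreasing in $K$ (hence $r_t^\mathbf{U}\le r_{t+1}^\mathbf{U}$) together with $1/(t+1)\le 1/t$ collapses both terms onto the $r_{t+1}^\mathbf{U}$ rate. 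Multiplying through by $y'/(n\eta^2)$ and absorbing the numerical factors into the stated constants $2\sqrt2$ and $4$ delivers the two branches.

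I expect the only genuinely delicate point to be the handling of the increment $\bar{z}_{ii}(t)-\bar{z}_{ii}(t+1)$. One is tempted to exploit the running-average recursion $\bar{z}_{ii}(t+1)-\bar{z}_{ii}(t)=\frac{1}{t+1}(z_{ii}(t+1)-\bar{z}_{ii}(t))$, but that re-introduces the unaveraged state $z_{ii}(t+1)$, whose error does \emph{not} vanish and would require a separate fluctuation bound. The cleaner route, adopted above, avoids the recursion and bounds the increment by the sum of two consecutive averaged-error norms; this sacrifices nothing in order (both routes give $O(1/t)$) while keeping the argument self-contained in terms of Theorems~\ref{thm:MSconvergence}--\ref{thm:welldefined}. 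The rest is bookkeeping: taking the ``large $t$'' thresholds from Theorem~\ref{thm:welldefined} (denominator) and Theorem~\ref{thm:eZas} (numerator) simultaneously, and recognizing the stated constants as valid, slightly loose, upper bounds rather than tight ones.
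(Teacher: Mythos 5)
Your proposal is correct and follows essentially the same route as the paper's Appendix E: bound the denominators via Theorem~\ref{thm:welldefined}, control the increment $|\bar{z}_{ii}(t+1)-\bar{z}_{ii}(t)|$ by the triangle inequality through $\omega_i$ together with \eqref{eq:barz-wi_errineq}, and then invoke Theorems~\ref{thm:MSconvergence} and \ref{thm:eZas}. Your variant of applying the triangle inequality at the Frobenius-norm level rather than entrywise only tightens the constants (giving $2$ and $2\sqrt{2}$ in place of the stated $2\sqrt{2}$ and $4$), so the claimed bounds still hold.
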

\begin{proof}
	See Appendix~\ref{app:correctionconvergence}.
\end{proof}

	The compensation nature of  $\boldsymbol{\epsilon}(t)$ is demonstrated in the next lemma, which guarantees convergence of the weighted sum to the desired $\hat{\theta}$. To this end, we denote $e_{\mathbf{x}}(t)\triangleq \boldsymbol{\omega}^T \mathbf{x}(t)-\hat{\theta}$.
\begin{lemma}\label{lem:weightsum}
	Let Assumptions 1 and 2 hold, then $e_{\mathbf{x}}(t)$ is approximately bounded by 
\[
	\mathbb{E}\bigl\{e_{\mathbf{x}}^2(t)\bigr\}\leq \frac{\nu^2 y''^2}{4\eta^2} \frac{1}{t},
\]
	and  almost surely
\[
	|e_{\mathbf{x}}(t)|\leq \begin{cases}
	\frac{\mu y''}{\eta\sqrt{n}}\frac{1}{t}, & \sup r_K^\mathbf{U}<\infty,\\
	\frac{\sqrt{2n}\alpha y''\|\tilde{\mathbf{L}}\|_2}{\eta} \frac{\sqrt{r_{t}^\mathbf{U} \log\log r_{t}^\mathbf{U}} }{t}, &\text{otherwise},
	\end{cases}
\]
	where $y''\triangleq \max_{1\leq i\leq n} \omega_i^{-1} |y_i|$.
\end{lemma}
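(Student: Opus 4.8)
The plan is to express $e_{\mathbf{x}}(t)$ entirely in terms of the diagonal eigenvector-estimation errors $\bar{z}_{ii}(t)-\omega_i$ and then inherit the bounds already established for $\mathbf{e}_{\bar{\mathbf{Z}}}$. The starting point is the closed form \eqref{eq:compensate}, which the telescoping design of the correction term \eqref{eq:correction} guarantees and which is valid precisely because Theorem~\ref{thm:welldefined} ensures the denominators are nonzero for large $t$. Since $\hat{\theta}=\frac{1}{n}\sum_{i=1}^n y_i$, subtracting it from \eqref{eq:compensate} gives
\[
	e_{\mathbf{x}}(t)=\frac{1}{n}\sum_{i=1}^n\Bigl(\frac{\omega_i}{\bar{z}_{ii}(t)}-1\Bigr)y_i=\frac{1}{n}\sum_{i=1}^n\frac{\omega_i-\bar{z}_{ii}(t)}{\bar{z}_{ii}(t)}\,y_i.
\]
This identity is really the whole content of the lemma; everything that follows is an estimate of its right-hand side.

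First I would control the denominators. By Theorem~\ref{thm:welldefined} we have $\bar{z}_{ii}(t)\geq \eta\omega_i$ almost surely for all $t\geq t_0$, so $|y_i|/\bar{z}_{ii}(t)\leq \omega_i^{-1}|y_i|/\eta\leq y''/\eta$ by the definition of $y''$. Pulling this factor out yields $|e_{\mathbf{x}}(t)|\leq \frac{y''}{n\eta}\sum_{i=1}^n|\bar{z}_{ii}(t)-\omega_i|$. Next I would apply Cauchy--Schwarz to the sum, $\sum_{i=1}^n|\bar{z}_{ii}(t)-\omega_i|\leq \sqrt{n}\bigl(\sum_{i=1}^n|\bar{z}_{ii}(t)-\omega_i|^2\bigr)^{1/2}$, and invoke \eqref{eq:barz-wi_errineq} to replace the $\ell_2$ norm of the diagonal errors by $\|\mathbf{e}_{\bar{\mathbf{Z}}}(t)\|_F$. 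This collapses everything into the single clean inequality $|e_{\mathbf{x}}(t)|\leq (y''/(\sqrt{n}\,\eta))\,\|\mathbf{e}_{\bar{\mathbf{Z}}}(t)\|_F$, from which both asserted estimates follow by substitution.

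It then remains only to feed in the convergence rates of $\mathbf{e}_{\bar{\mathbf{Z}}}$. For the mean-square bound I would square the last inequality, take expectations, and use $\mathbb{E}\{\|\mathbf{e}_{\bar{\mathbf{Z}}}(t)\|_F^2\}\leq \frac{n\nu^2}{4}\frac{1}{t}$ from Theorem~\ref{thm:MSconvergence}, the factor $n$ cancelling to give $\frac{\nu^2 y''^2}{4\eta^2}\frac{1}{t}$. For the almost-sure bounds I would plug in the two cases of Theorem~\ref{thm:eZas}: the $\sup_K r_K^{\mathbf{U}}<\infty$ estimate $\|\mathbf{e}_{\bar{\mathbf{Z}}}(t)\|_F\leq \mu/t$ produces the coefficient $\mu y''/(\eta\sqrt{n})$, while the $r_K^{\mathbf{U}}\to\infty$ estimate $\|\mathbf{e}_{\bar{\mathbf{Z}}}(t)\|_F\leq \alpha n\|\tilde{\mathbf{L}}\|_2\sqrt{2r_t^{\mathbf{U}}\log\log r_t^{\mathbf{U}}}/t$ produces the stated $\sqrt{2n}\,\alpha y''\|\tilde{\mathbf{L}}\|_2/\eta$ coefficient with the same $\sqrt{r_t^{\mathbf{U}}\log\log r_t^{\mathbf{U}}}/t$ rate. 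There is no genuine analytical obstacle here: the heavy machinery (the law of large numbers and the law of the iterated logarithm) was already absorbed into Theorems~\ref{thm:MSconvergence}--\ref{thm:eZas}, so this lemma is essentially a corollary. The only point requiring care is bookkeeping of the qualifier ``for all large $t$'': the denominator control $\bar{z}_{ii}(t)\geq \eta\omega_i$ is an almost-sure statement holding only for $t\geq t_0$, so all conclusions are asserted for large $t$, and the mean-square inequality is understood on that event, consistent with the standing assumption imposed after Theorem~\ref{thm:welldefined}.
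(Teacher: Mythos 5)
Your proposal is correct and follows essentially the same route as the paper: starting from \eqref{eq:compensate}, bounding the denominators via Theorem~\ref{thm:welldefined}, applying Cauchy--Schwarz together with \eqref{eq:barz-wi_errineq} to arrive at $|e_{\mathbf{x}}(t)|\leq (y''/(\eta\sqrt{n}))\|\mathbf{e}_{\bar{\mathbf{Z}}}(t)\|_F$, and then invoking Theorems~\ref{thm:MSconvergence} and \ref{thm:eZas}. The only difference is the order in which you apply the denominator bound and Cauchy--Schwarz (the paper applies H\"older first and then bounds $y_i^2/\bar{z}_{ii}^2(t)$), which is immaterial.
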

\begin{proof}
	See Appendix~\ref{app:MSweightsum}.
\end{proof}

	Based on Lemma~\ref{lem:weightsum}, we can decompose the estimation error $\mathbf{e}_{\bar{ \mathbf{x}}}(K)\triangleq \bar{\mathbf{x}}(K)-\hat{\theta} \mathbf{1}$ into two parts:  $\mathbf{e}_{\bar{ \mathbf{x}}}(K)=e_{\mathbf{x}}(K)\mathbf{1}+\mathbf{e}_{\bar{\mathbf{x}},\mathbf{x}}(K)$, where $\mathbf{e}_{\bar{\mathbf{x}},\mathbf{x}}(K)\triangleq \bar{\mathbf{x}}(K)-\boldsymbol{\omega}^T \mathbf{x}(K)\mathbf{1}$.  In the following, it suffices to  provide an upper bound of $\|\mathbf{e}_{\bar{\mathbf{x}},\mathbf{x}}(K)\|$. In fact, similar to \eqref{eq:eZeigenvec}, one can obtain from \eqref{eq:compensate} and \eqref{eq:averagecompact} that
\begin{align}\label{eq:tilde_eK}
\mathbf{e}_{\bar{\mathbf{x}},\mathbf{x}}(K)
	&=\frac{1}{K}\Biggl(\underset{\mathcal{I}_1}{\underbrace{\tilde{\mathbf{Q}}(\mathbf{I}-\mathbf{Q}^K)\mathbf{y}-\alpha \sum_{k=0}^{K-1}\mathbf{W}_K(k)\tilde{\mathbf{L}}\mathbf{v}(k)}}\Biggr.\notag\\
	&\relphantom{=}{}+\underset{\mathcal{I}_2}{\underbrace{\tilde{\mathbf{Q}}\sum_{k=0}^{K-1}\mathbf{W}_K(k)\boldsymbol{\epsilon}(k)}}+\underset{\mathcal{I}_3}{\underbrace{\frac{1}{n}\sum_{k=1}^{K}\mathbf{1} \mathbf{x}^T(0)\boldsymbol{\varepsilon}_{K}(k)}}\Biggr),
\end{align}
	where $\boldsymbol{\varepsilon}_K(k)=[\varepsilon_{1K}(k), \dots, \varepsilon_{nK}(k)]^T$ with the $i$-th entry being $\varepsilon_{iK}(k)=\omega_i[1/\bar{z}_{ii}(k)-1/\bar{z}_{ii}(K)]$.
\subsubsection{Mean square performance}
	We have the next result regarding the mean square convergence of $\mathbf{e}_{\bar{\mathbf{x}},\mathbf{x}}(K)$. 
\begin{lemma}\label{lem:MSbarxx_eK}
	Under Assumptions 1 and 2,  we have 
\[
	\mathbb{E}\left\{\|\mathbf{e}_{\bar{\mathbf{x}},\mathbf{x}}(K) \|^2\right\}\leq \frac{3\nu^2 (ny''^2+2c_{\mathbf{Q},n}^2y'^2\|\tilde{\mathbf{Q}}\|_2^2)}{2n\eta^4}\frac{\log K}{K}.
\]
\end{lemma}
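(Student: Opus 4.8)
The plan is to start from the decomposition \eqref{eq:tilde_eK}, $\mathbf{e}_{\bar{\mathbf{x}},\mathbf{x}}(K)=\frac{1}{K}(\mathcal{I}_1+\mathcal{I}_2+\mathcal{I}_3)$, and apply the elementary inequality $\|\mathcal{I}_1+\mathcal{I}_2+\mathcal{I}_3\|^2\le 3(\|\mathcal{I}_1\|^2+\|\mathcal{I}_2\|^2+\|\mathcal{I}_3\|^2)$, which accounts for the factor $3$ in the claimed bound and reduces the task to bounding $K^{-2}\mathbb{E}\{\|\mathcal{I}_j\|^2\}$ for each $j$ separately. Throughout I would use the uniform bound $\|\mathbf{W}_K(k)\|_2=\|\mathbf{I}-\mathbf{Q}^{K-k}\|_2\le c_{\mathbf{Q},n}$ from Lemma~\ref{lem:parameter}, the well-definedness estimate $\bar{z}_{ii}(t)\ge\eta\omega_i$ from Theorem~\ref{thm:welldefined}, and the mean square rate $\mathbb{E}\{\|\mathbf{e}_{\bar{\mathbf{Z}}}(k)\|_F^2\}\le n\nu^2/(4k)$ from Theorem~\ref{thm:MSconvergence}.

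For $\mathcal{I}_1=\tilde{\mathbf{Q}}(\mathbf{I}-\mathbf{Q}^K)\mathbf{y}-\alpha\sum_{k=0}^{K-1}\mathbf{W}_K(k)\tilde{\mathbf{L}}\mathbf{v}(k)$ I would mimic the computation behind \eqref{eq:eZmeansquarebound}: since $\{\mathbf{v}(k)\}$ are zero-mean and temporally independent by Assumption~2, the cross terms vanish and the stochastic part contributes $\frac{\alpha^2}{K^2}\sum_{k}\|\mathbf{W}_K(k)\tilde{\mathbf{L}}\|_2^2\,\mathbb{E}\{\|\mathbf{v}(k)\|^2\}=O(1/K)$, while the deterministic part is $O(1/K^2)$; hence $K^{-2}\mathbb{E}\{\|\mathcal{I}_1\|^2\}=O(1/K)$, dominated by the target rate $\log K/K$. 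The term $\mathcal{I}_2=\tilde{\mathbf{Q}}\sum_{k=0}^{K-1}\mathbf{W}_K(k)\boldsymbol{\epsilon}(k)$ is more delicate because $\boldsymbol{\epsilon}(k)$ is a deterministic function of the $\bar{z}$-process and is therefore neither zero-mean nor independent across $k$, so orthogonality is unavailable. Instead I would pass to norms, $\|\mathcal{I}_2\|\le c_{\mathbf{Q},n}\|\tilde{\mathbf{Q}}\|_2\sum_{k}\|\boldsymbol{\epsilon}(k)\|$, apply Cauchy--Schwarz in the form $(\sum_{k=1}^{K}\|\boldsymbol{\epsilon}(k)\|)^2\le K\sum_{k}\|\boldsymbol{\epsilon}(k)\|^2$, and insert $\mathbb{E}\{\|\boldsymbol{\epsilon}(k)\|^2\}\le \nu^2 y'^2/(n\eta^4 k)$ from Lemma~\ref{lem:correctionconvergence}; summing $\sum_{k\le K}1/k\sim\log K$ then yields the $y'$-part of the stated coefficient at rate $\log K/K$.

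Finally, for $\mathcal{I}_3=\frac{1}{n}\sum_{k=1}^{K}\mathbf{1}\,\mathbf{x}^T(0)\boldsymbol{\varepsilon}_K(k)$ I would first control each scalar $\mathbf{x}^T(0)\boldsymbol{\varepsilon}_K(k)$, recalling $\mathbf{x}(0)=\mathbf{y}$. Writing $1/\bar{z}_{ii}(k)-1/\bar{z}_{ii}(K)=(\bar{z}_{ii}(K)-\bar{z}_{ii}(k))/(\bar{z}_{ii}(k)\bar{z}_{ii}(K))$ and using $\bar{z}_{ii}\ge\eta\omega_i$ gives $|\varepsilon_{iK}(k)|\le \eta^{-2}\omega_i^{-1}(|\bar{z}_{ii}(k)-\omega_i|+|\bar{z}_{ii}(K)-\omega_i|)$, so that with $y''=\max_i\omega_i^{-1}|y_i|$, Cauchy--Schwarz over $i$ and \eqref{eq:barz-wi_errineq} give $|\mathbf{x}^T(0)\boldsymbol{\varepsilon}_K(k)|\lesssim \eta^{-2}y''(\|\mathbf{e}_{\bar{\mathbf{Z}}}(k)\|_F+\|\mathbf{e}_{\bar{\mathbf{Z}}}(K)\|_F)$. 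Since $\|\mathcal{I}_3\|^2=\frac{1}{n}|\sum_k\mathbf{x}^T(0)\boldsymbol{\varepsilon}_K(k)|^2\le \frac{K}{n}\sum_k|\mathbf{x}^T(0)\boldsymbol{\varepsilon}_K(k)|^2$ by the same Cauchy--Schwarz device, taking expectations and invoking $\mathbb{E}\{\|\mathbf{e}_{\bar{\mathbf{Z}}}(k)\|_F^2\}\le n\nu^2/(4k)$ leaves $\sum_{k\le K}(1/k+1/K)\sim\log K$, producing the $y''$-part of the coefficient at rate $\log K/K$. Adding the three contributions and taking the worst constants gives the claim. I expect $\mathcal{I}_3$ to be the main obstacle: unlike $\mathcal{I}_1$ it carries no independence to kill cross terms, and unlike $\mathcal{I}_2$ the summand couples the running average at the generic time $k$ to the terminal time $K$ through $\boldsymbol{\varepsilon}_K(k)$, so the correlation across the index $k$ must be absorbed by the crude $(\sum)^2\le K\sum$ estimate --- which is precisely what turns the per-term $O(1/k)$ decay into the overall $\log K/K$ rate rather than the faster $1/K$ one would obtain from independence.
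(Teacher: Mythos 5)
Your proposal follows essentially the same route as the paper's proof in Appendix~G: the same three-term decomposition with the factor $3$, orthogonality from Assumption~2 for $\mathcal{I}_1$, the $(\sum_k a_k)^2\le K\sum_k a_k^2$ device combined with Lemma~\ref{lem:correctionconvergence} for $\mathcal{I}_2$, and the combination of Theorem~\ref{thm:welldefined}, inequality \eqref{eq:barz-wi_errineq}, Theorem~\ref{thm:MSconvergence} and the harmonic-sum estimate for $\mathcal{I}_3$. The argument is correct and matches the paper's constants and rate.
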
 
\begin{proof}
	See Appendix~\ref{app:MStilde_eK}.
\end{proof}

	Note that $\mathbb{E}\left\{\|\mathbf{e}_{\bar{\mathbf{x}}}(K) \|^2\right\}\leq 2 \mathbb{E}\{\|e_{\mathbf{x}}(K)\mathbf{1}\|^2\}+2 \mathbb{E}\{\|\mathbf{e}_{\bar{\mathbf{x}},\mathbf{x}}(K)\|^2\}$, we immediately have the next result of $\mathbb{E}\left\{\|\mathbf{e}_{\bar{\mathbf{x}}}(K) \|^2\right\}$ based on Lemmas~\ref{lem:weightsum} and \ref{lem:MSbarxx_eK}.
\begin{theorem}\label{thm:MSe_Kbound}
	Let Assumptions 1 and 2 hold, then at each node $i$, the running average $\bar{x}_i(K)$ converges to the centralized estimate $\hat{\theta}$ in mean square sense. Moreover, the mean square deviation is approximately bounded by 
\begin{equation*}
	\mathbb{E}\left\{\|\mathbf{e}_{\bar{\mathbf{x}}}(K) \|^2\right\}\leq  \frac{3\nu^2 (ny''^2+2c_{\mathbf{Q},n}^2y'^2\|\tilde{\mathbf{Q}}\|_2^2)}{n \eta^4 }\frac{\log K}{K}.
\end{equation*}
\end{theorem}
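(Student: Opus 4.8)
The plan is to read off both claims---mean-square convergence and the stated rate---directly from the additive decomposition $\mathbf{e}_{\bar{\mathbf{x}}}(K)=e_{\mathbf{x}}(K)\mathbf{1}+\mathbf{e}_{\bar{\mathbf{x}},\mathbf{x}}(K)$ that has already been set up, together with the two preceding lemmas. Since the two summands have each been analyzed separately, essentially no new estimate is required: the argument is an assembly step. I would begin by applying the elementary bound $\|\mathbf a+\mathbf b\|^2\le 2\|\mathbf a\|^2+2\|\mathbf b\|^2$ to this decomposition and taking expectations, which gives $\mathbb{E}\{\|\mathbf{e}_{\bar{\mathbf{x}}}(K)\|^2\}\le 2\,\mathbb{E}\{\|e_{\mathbf{x}}(K)\mathbf{1}\|^2\}+2\,\mathbb{E}\{\|\mathbf{e}_{\bar{\mathbf{x}},\mathbf{x}}(K)\|^2\}$, which is exactly the inequality quoted just before the theorem.

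Next I would reduce the first term to a scalar: because $e_{\mathbf{x}}(K)$ is a scalar and $\|\mathbf{1}\|^2=n$, one has $\mathbb{E}\{\|e_{\mathbf{x}}(K)\mathbf{1}\|^2\}=n\,\mathbb{E}\{e_{\mathbf{x}}^2(K)\}$. Substituting the mean-square bound of Lemma~\ref{lem:weightsum} with $t=K$ then controls this term by $\tfrac{n\nu^2y''^2}{4\eta^2}\tfrac1K$, while Lemma~\ref{lem:MSbarxx_eK} controls the second term by $\tfrac{3\nu^2(ny''^2+2c_{\mathbf{Q},n}^2y'^2\|\tilde{\mathbf{Q}}\|_2^2)}{2n\eta^4}\tfrac{\log K}{K}$. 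Adding the two (each carrying its factor of $2$) yields a bound consisting of an $O(1/K)$ contribution from the weighted-sum term $e_{\mathbf{x}}(K)\mathbf{1}$ and an $O(\log K/K)$ contribution from the fluctuation term $\mathbf{e}_{\bar{\mathbf{x}},\mathbf{x}}(K)$.

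The only point needing care is the word ``approximately'' in the statement. The combined bound is genuinely the sum of these two pieces, whereas the displayed inequality retains only the second. I would justify dropping the first by noting $1/K=o(\log K/K)$, so the $O(1/K)$ term is of strictly lower order and is absorbed under the paper's stated convention of discarding $o(g(K))$ relative to the dominant $g(K)=\tfrac{\log K}{K}$; the leading coefficient is then precisely $\tfrac{3\nu^2(ny''^2+2c_{\mathbf{Q},n}^2y'^2\|\tilde{\mathbf{Q}}\|_2^2)}{n\eta^4}$, after the factor of $2$ cancels the $\tfrac12$ in Lemma~\ref{lem:MSbarxx_eK}. Mean-square convergence of $\bar{x}_i(K)$ to $\hat{\theta}$ is then immediate, since both contributions vanish as $K\to\infty$; componentwise this reads $\mathbb{E}\{(\bar x_i(K)-\hat\theta)^2\}\le \mathbb{E}\{\|\mathbf{e}_{\bar{\mathbf{x}}}(K)\|^2\}\to 0$.

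Because the whole argument is a recombination of results already in hand, there is no genuine obstacle at this stage; the real work has been pushed into Lemma~\ref{lem:MSbarxx_eK}---where the $\log K$ factor arises from summing the variances of the telescoping correction terms $\boldsymbol{\epsilon}(k)$ and from the term $\mathcal{I}_3$ in \eqref{eq:tilde_eK}---and into Lemma~\ref{lem:weightsum}. If anything, the only subtle step here is the bookkeeping: correctly pairing the factor of $2$ with the $\tfrac12$ in Lemma~\ref{lem:MSbarxx_eK}, and confirming that the discarded term is indeed $o(\log K/K)$ with constants independent of $K$.
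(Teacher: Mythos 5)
Your proposal matches the paper's argument exactly: the paper derives the theorem "immediately" from the inequality $\mathbb{E}\{\|\mathbf{e}_{\bar{\mathbf{x}}}(K)\|^2\}\le 2\,\mathbb{E}\{\|e_{\mathbf{x}}(K)\mathbf{1}\|^2\}+2\,\mathbb{E}\{\|\mathbf{e}_{\bar{\mathbf{x}},\mathbf{x}}(K)\|^2\}$ combined with Lemmas~\ref{lem:weightsum} and \ref{lem:MSbarxx_eK}, absorbing the $O(1/K)$ contribution into the dominant $O(\log K/K)$ term under the paper's stated convention. Your bookkeeping of the factor of $2$ against the $\tfrac12$ in Lemma~\ref{lem:MSbarxx_eK} is also correct.
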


\subsubsection{Almost sure performance}
	Before we move on to the almost sure analysis of $\|\bar{\mathbf{x}}(K)\|$, we introduce a similar function as in \eqref{eq:rU} 
\begin{equation}\label{eq:rv}
    r_K^{\mathbf{v}}\triangleq  \lambda_{\max}\left(\sum_{t=0}^{K-1} \text{Cov}(\mathbf{v}(t))\right).
\end{equation}
	
	Analogue to Lemma~\ref{lem:MSbarxx_eK} and Theorem~\ref{thm:MSe_Kbound}, we have the following result regarding the almost sure performance of $\|\bar{\mathbf{x}}(K)\|$. 
\begin{theorem}\label{thm:ase_Kbound}
	Let Assumptions 1 and 2 hold, then at each  node $i$, the running average $\bar{x}_i(K)$ converges to the centralized estimate $\hat{\theta}$ almost surely. Moreover,  for large $t$, the approximate upper bounds of $\|\mathbf{e}_{\bar{\mathbf{x}}}(K)\|$ are summarized in Table~\ref{tab:upperbounderror}.
\end{theorem}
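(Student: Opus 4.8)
The plan is to mirror the mean-square argument of Lemma~\ref{lem:MSbarxx_eK} and Theorem~\ref{thm:MSe_Kbound}, replacing each second-moment estimate by its almost-sure (law-of-the-iterated-logarithm) analogue. Starting from the decomposition $\mathbf{e}_{\bar{\mathbf{x}}}(K)=e_{\mathbf{x}}(K)\mathbf{1}+\mathbf{e}_{\bar{\mathbf{x}},\mathbf{x}}(K)$ and the triangle inequality $\|\mathbf{e}_{\bar{\mathbf{x}}}(K)\|\leq\sqrt{n}\,|e_{\mathbf{x}}(K)|+\|\mathbf{e}_{\bar{\mathbf{x}},\mathbf{x}}(K)\|$, I would bound the two pieces separately and keep only the dominant rate in each regime. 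The first piece is supplied directly by the almost-sure part of Lemma~\ref{lem:weightsum}: it is $O(K^{-1})$ when $\sup_K r_K^{\mathbf{U}}<\infty$ and $O(K^{-1}\sqrt{r_K^{\mathbf{U}}\log\log r_K^{\mathbf{U}}})$ otherwise, and notably carries no $\log K$ factor.

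For the second piece I would invoke the splitting $\mathbf{e}_{\bar{\mathbf{x}},\mathbf{x}}(K)=K^{-1}(\mathcal{I}_1+\mathcal{I}_2+\mathcal{I}_3)$ in \eqref{eq:tilde_eK} and estimate each term almost surely. The deterministic part of $\mathcal{I}_1$ is $O(K^{-1})$ via $\|\mathbf{I}-\mathbf{Q}^K\|_F\leq c_{\mathbf{Q},n}$ from Lemma~\ref{lem:parameter}, while its stochastic part $\alpha\sum_{k=0}^{K-1}\mathbf{W}_K(k)\tilde{\mathbf{L}}\mathbf{v}(k)$ is handled by exactly the iterated-logarithm argument used for the corresponding term in the proof of Theorem~\ref{thm:eZas}, now with $r_K^{\mathbf{v}}$ from \eqref{eq:rv} replacing $r_K^{\mathbf{U}}$; this is the origin of the $\sqrt{r_K^{\mathbf{v}}\log\log r_K^{\mathbf{v}}}$ dependence. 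For $\mathcal{I}_2$ I would combine $\|\mathbf{W}_K(k)\|_F\leq c_{\mathbf{Q},n}$ with the almost-sure bounds on $\|\boldsymbol{\epsilon}(k)\|$ from Lemma~\ref{lem:correctionconvergence}; since these decay like $k^{-1}$ (times the $r^{\mathbf{U}}$ iterated-logarithm factor in the unbounded regime), the sum $\sum_{k} k^{-1}\sim\log K$ followed by the $K^{-1}$ prefactor yields the characteristic $K^{-1}\log K$ rate.

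The term $\mathcal{I}_3$ is where I expect the main difficulty. I would write $1/\bar{z}_{ii}(k)-1/\bar{z}_{ii}(K)=(\bar{z}_{ii}(K)-\bar{z}_{ii}(k))/(\bar{z}_{ii}(k)\bar{z}_{ii}(K))$, bound the denominators below using $\bar{z}_{ii}(t)\geq\eta\omega_i$ from Theorem~\ref{thm:welldefined}, and split the numerator as $|\bar{z}_{ii}(K)-\omega_i|+|\bar{z}_{ii}(k)-\omega_i|$. Each diagonal error is then controlled almost surely through \eqref{eq:barz-wi_errineq} and Theorem~\ref{thm:eZas}, i.e.\ it is $O(k^{-1})$ or $O(k^{-1}\sqrt{r_k^{\mathbf{U}}\log\log r_k^{\mathbf{U}}})$; summing over $1\leq k\leq K$ and dividing by $K$ reproduces a $\log K$ factor (respectively that factor times the $r^{\mathbf{U}}$ iterated-logarithm term). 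The delicate points are that the double index in $\varepsilon_{iK}(k)$ couples the running time $k$ to the terminal time $K$, and that the $r^{\mathbf{U}}$ and $r^{\mathbf{v}}$ randomness enter $\mathcal{I}_3$, $\mathcal{I}_2$ and $\mathcal{I}_1$ through different channels and must be tracked simultaneously.

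Finally I would assemble the three contributions by the triangle inequality, group the resulting terms by powers of $K^{-1}$, $\log K$ and the two iterated-logarithm quantities, and in each of the four regimes of Table~\ref{tab:upperbounderror} retain only the dominant term via $\max\{\cdot,\cdot\}$; matching the constants against $\mu$, $y'$, $y''$, $c_{\mathbf{Q},n}$, $\|\tilde{\mathbf{Q}}\|_2$ and $\|\tilde{\mathbf{L}}\|_2$ then gives the four tabulated bounds. Almost-sure convergence $\bar{x}_i(K)\to\hat{\theta}$ follows because $\log K=o(\sqrt{K})$ and $r_K^{\mathbf{v}},r_K^{\mathbf{U}}=O(K)$, so every entry of the table tends to zero.
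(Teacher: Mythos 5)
Your proposal follows essentially the same route as the paper's proof: the same decomposition $\mathbf{e}_{\bar{\mathbf{x}}}(K)=e_{\mathbf{x}}(K)\mathbf{1}+\mathbf{e}_{\bar{\mathbf{x}},\mathbf{x}}(K)$ with Lemma~\ref{lem:weightsum} for the first piece, the same $\mathcal{I}_1,\mathcal{I}_2,\mathcal{I}_3$ splitting with the Kolmogorov three-series theorem and the law of the iterated logarithm applied separately through $r_K^{\mathbf{v}}$ and $r_K^{\mathbf{U}}$, the same reciprocal-difference treatment of $\boldsymbol{\varepsilon}_K(k)$ via Theorems~\ref{thm:welldefined} and \ref{thm:eZas}, and the same four-regime assembly. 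The argument is correct as outlined.
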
 

\begin{proof}
	See Appendix~\ref{app:ase_Kbound}.
\end{proof}

	From Theorems~\ref{thm:MSe_Kbound} and \ref{thm:ase_Kbound}, we can see that the starting point $t_0$ contributes little to the rate of convergence of the proposed algorithm, since $\log(t_0+K)\approx \log K$, for large $K$. This means that we can start the running averages $\bar{\mathbf{Z}}(K)$ and $\bar{\mathbf{x}}(K)$ at any time during the iteration. This is exactly what we have done in Algorithms~\ref{alg:estlefteig} and \ref{alg:estimation} by introducing the starting points $k_0,t_0$ for the averaging processes.  

\begin{remark}
   Existing results of consensus algorithms over undirected networks show that the sample mean estimate can be achieved in the mean square sense in the presence of quantization errors only if the quantization error variance at each node converges to 0 \cite{FangLi09,ThanKokPuFros13}. However, with the running average technique, the proposed algorithm is proven to be convergent to the sample mean estimate both in the mean square and almost sure senses without such restrictive requirement. This validates the advantage of the running average technique in dealing with the random quantization errors for distributed estimation problems over sensor networks.
\end{remark}

\section{Simulation results} 
\label{sec:simulation_results}
	In this section, we provide some simulation results to validate the theoretical results given in the previous section.
\begin{figure}[!t]
	\begin{center}
		\includegraphics[width=2.5in]{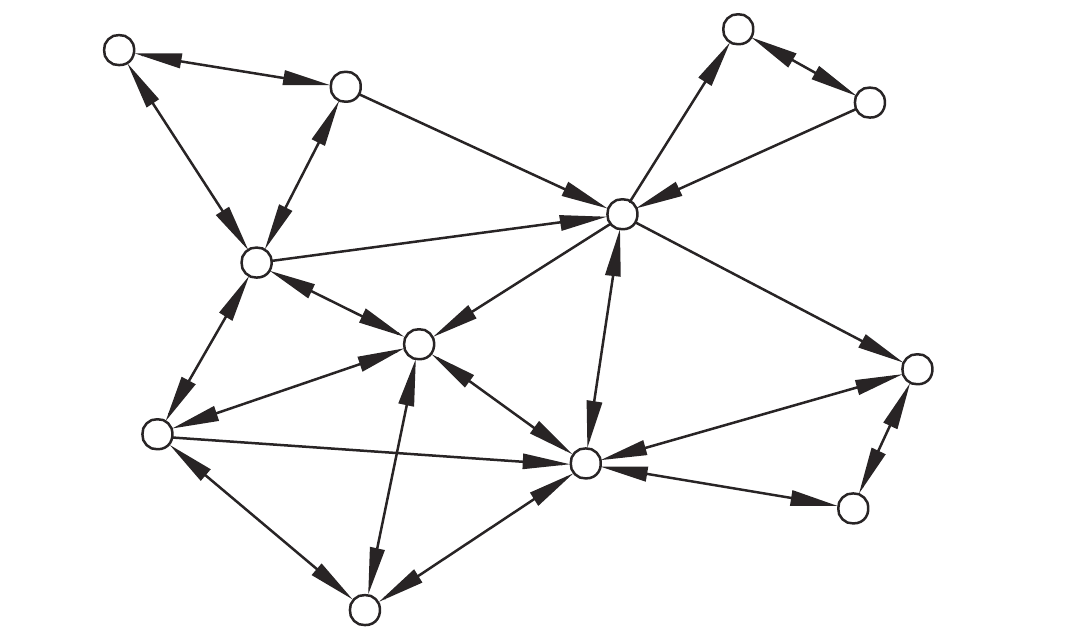}
	\end{center}
	\caption{A sensor network with 12 nodes modeled as a directed graph.}	
	\label{fig:topology}
\end{figure}
\begin{figure}[!t]
	\begin{center}
		\includegraphics[width=2.5in]{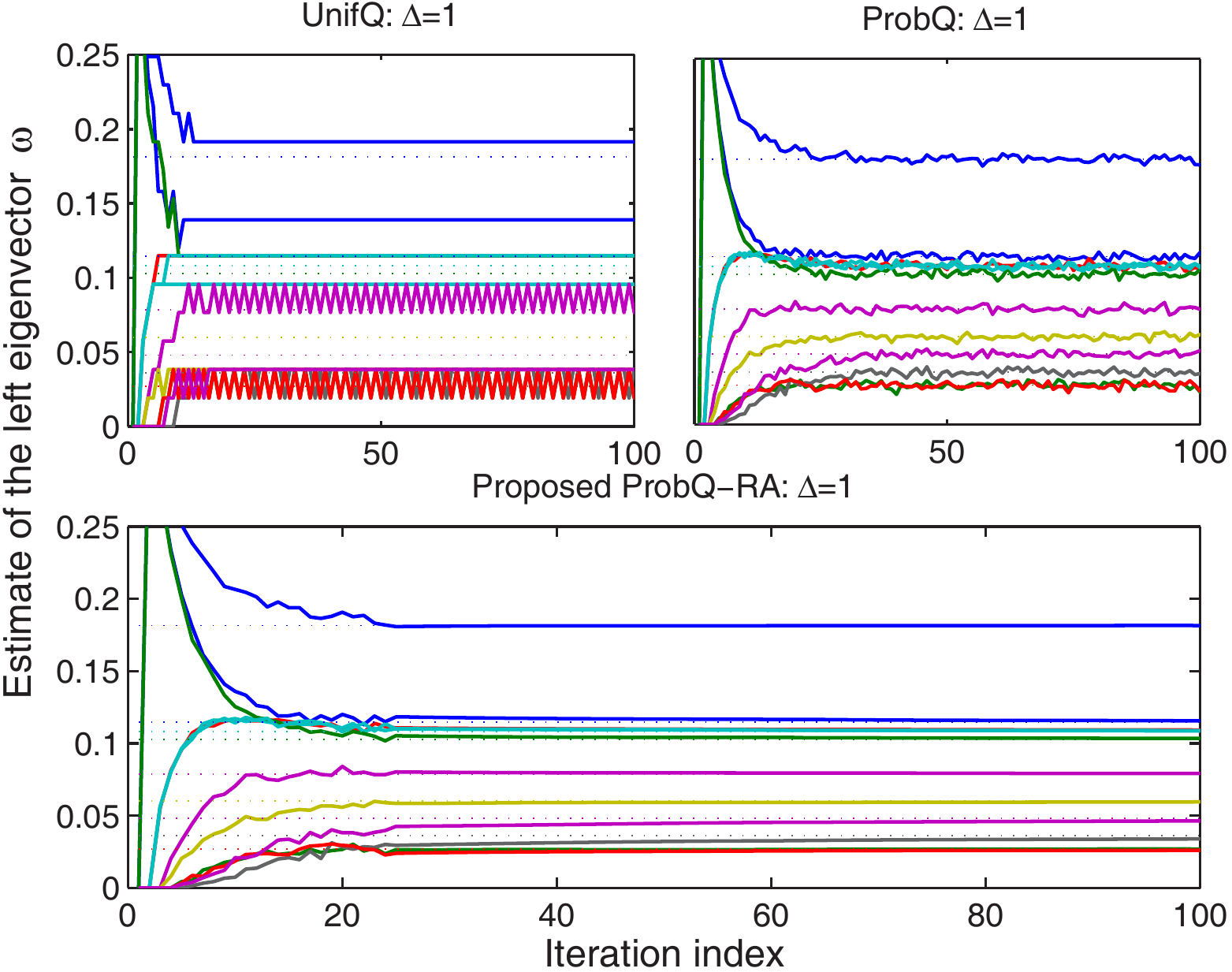}
	\end{center}
	\caption{Estimate of the left eigenvector $\boldsymbol{\omega}$  of one node for $\Delta=1$.}
	\label{fig:leigvec}
\end{figure}

	Consider a sensor network with 12 nodes to monitor an unknown parameter $\theta=2$. The directed communication topology is shown in Fig.~\ref{fig:topology}. Each node makes the measurement with $y_i=\theta+ n_i$, where $n_i$ is the white Gaussian noise with zero mean and unit variance. As an illustration, we choose the Metropolis-type weight $a_{ij}=(1+d_i)^{-1}$, if $j\in \mathcal{N}_i$ and 0, otherwise. In this case,  $\alpha=1$ is sufficient for both Lemmas~\ref{lem:matrixproperty} and \ref{lem:parameter}. 
	For each implementation of the proposed algorithm, the initial state $x_i(0)$ is randomly chosen from the interval $[y_i-1,y_i+1]$, $\forall i$. 

	In the following simulations, both the deterministic uniform quantization (UnifQ) \cite{CarFagFraZam10,GerGray92} and probabilistic quantization (ProbQ) \cite{AysCoatRabb08,CarFagFraZam10} are considered and compared. The proposed averaging based algorithm is denoted as ProbQ-RA.
	Simulation results are averages over 100 independent runs. 

\begin{figure}[!t]
	\begin{center}
		\subfloat[]{\includegraphics[width=1.75in]{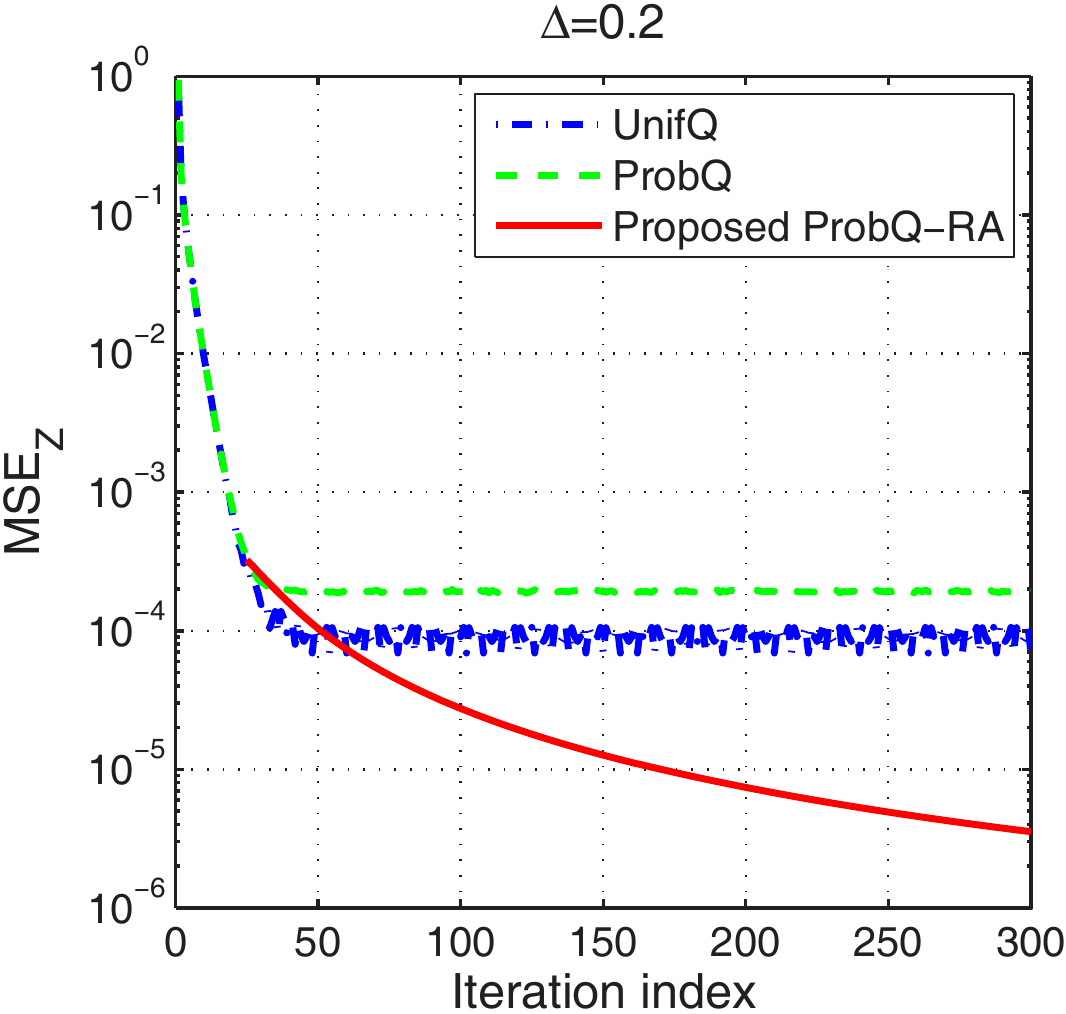}}
		\subfloat[]{\includegraphics[width=1.75in]{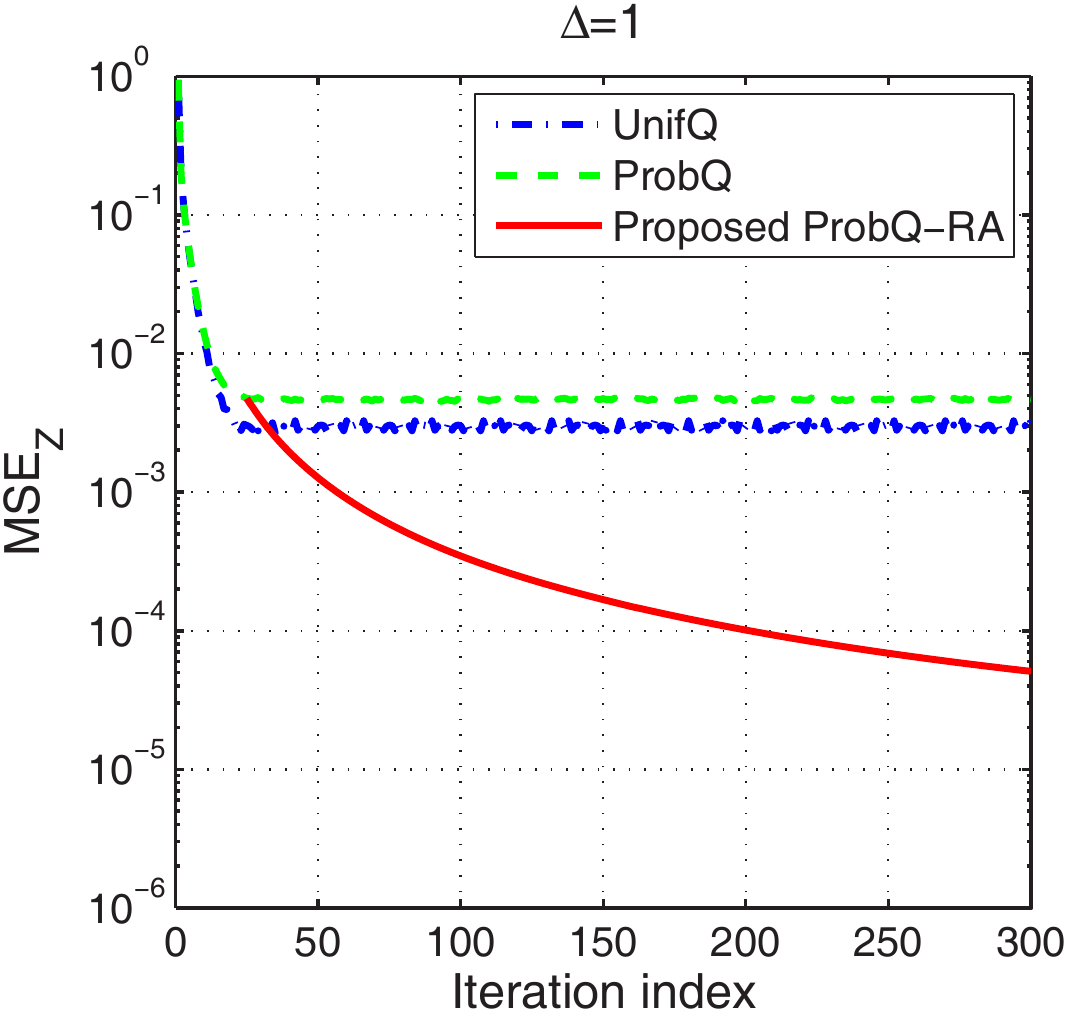}}\\
		\subfloat[]{\includegraphics[width=1.75in]{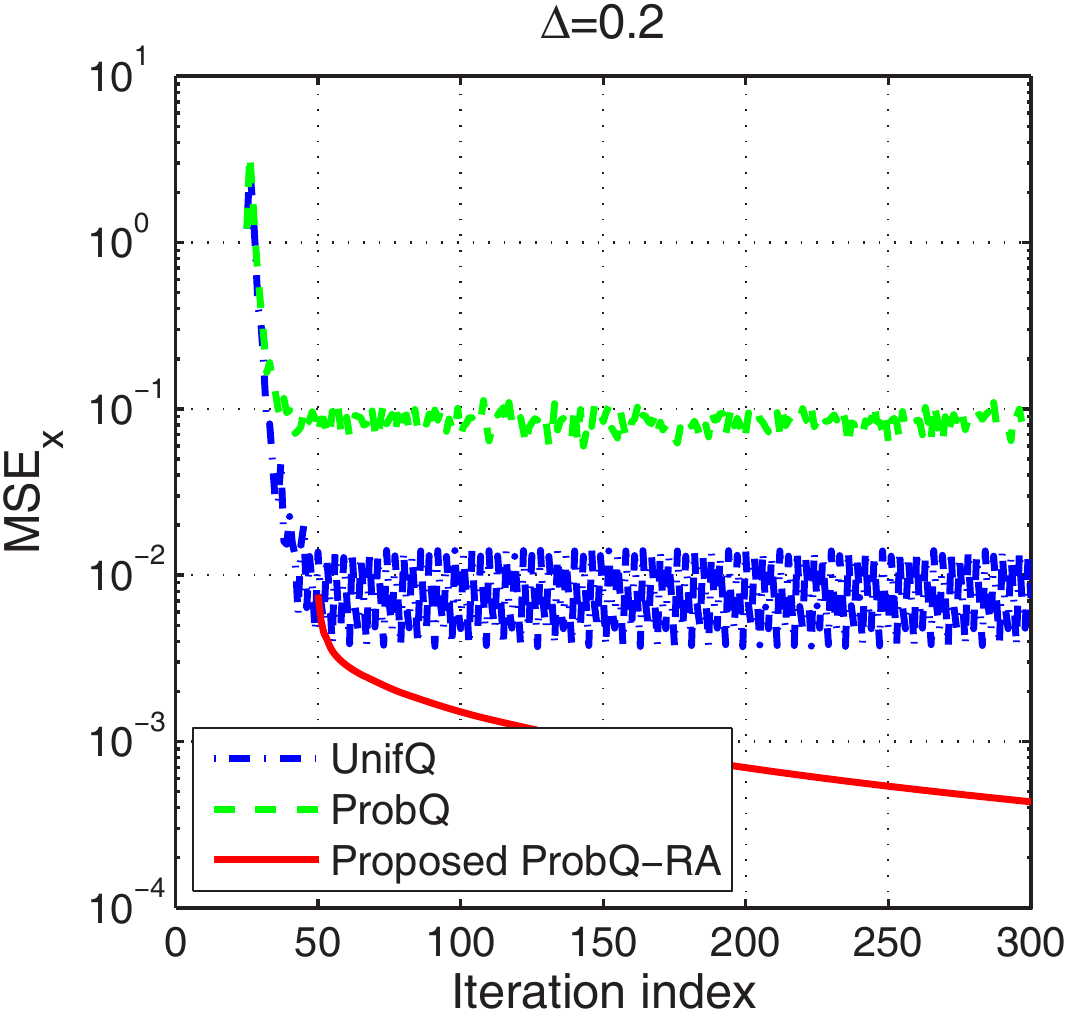}}
		\subfloat[]{\includegraphics[width=1.75in]{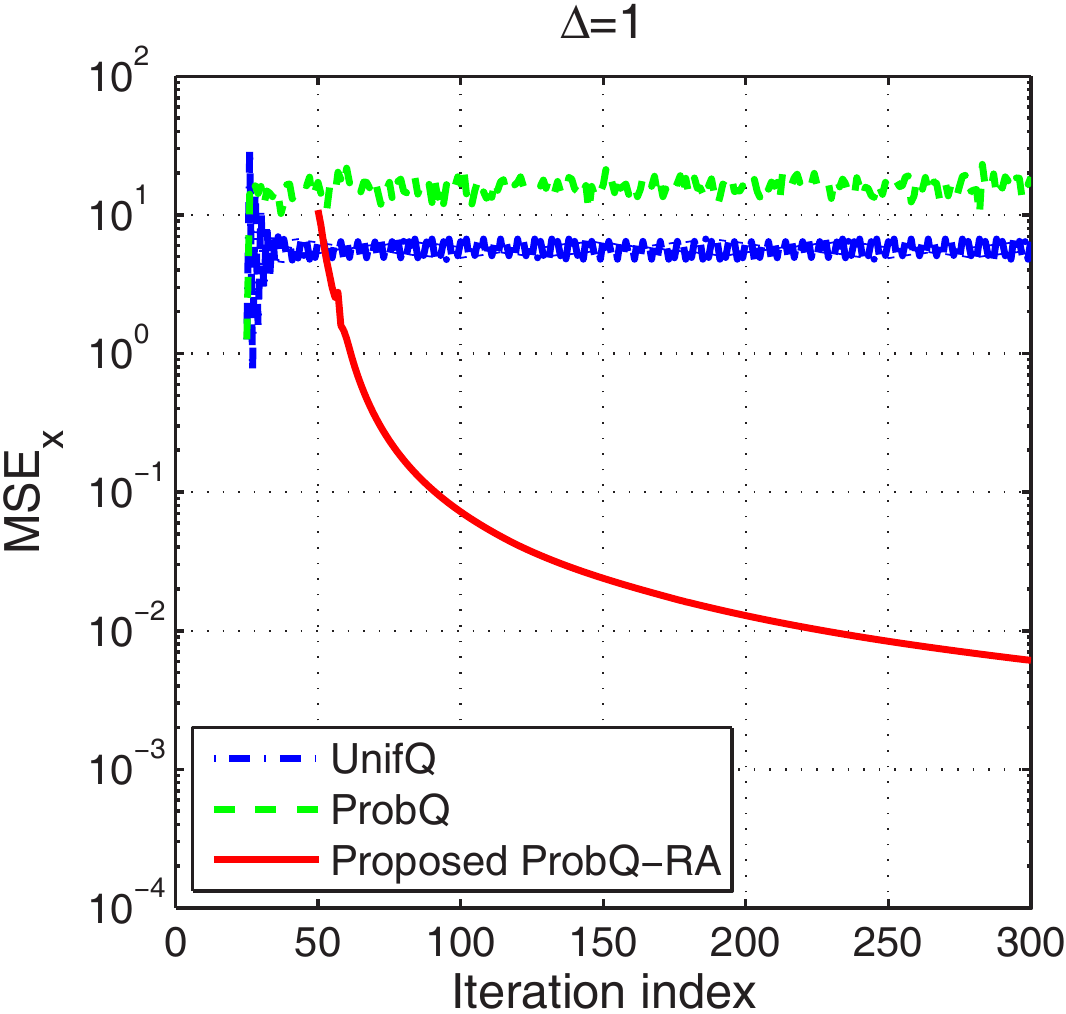}}
	\end{center}
	\caption{Comparison of  the mean square errors of UnifQ, ProbQ and ProbQ-RA: MSE${}_{\mathbf{Z}}$ ((a) and (b)), and MSE${}_{\mathbf{x}}$ ((c) and (d)) with respect to $\Delta\in \{0.2,1\}$.}	
	\label{fig:MSEzx}
\end{figure} 

\subsection{Comparison of the deterministic and probabilistic quantization}
	First, we simulate the eigenvector estimation algorithm of  Algorithm~\ref{alg:estlefteig}. Here, $\kappa=1.15$ and the starting point is taken as $k_0=25$.
	Fig.~\ref{fig:leigvec} depicts the estimate of the left eigenvector $\boldsymbol{\omega}$ at one node for $\Delta=1$.  In Fig.~\ref{fig:leigvec}, we use the original state of ProbQ as the estimate of $\boldsymbol{\omega}$ in the first 25 steps. From the results, we observe that steady residues occur for UnifQ, and there are fluctuations for ProbQ. While for the proposed ProbQ-RA, the running average has an obvious smoothing effect, where the randomness of ProbQ is smeared out. The performance of ProbQ-RA is rather satisfactory  compared with the large residues observed in both UnifQ and ProbQ.

    To quantify the performances, we use the average of the mean square error as an indicator, for Algorithm~\ref{alg:estlefteig}, we define
    \[
    \text{MSE}_{\mathbf{Z}}=\frac{1}{n}\sum_{i=1}^n \|\mathbf{z}_i(t)-\boldsymbol{\omega}\|^2, \
    \text{MSE}_{\bar{\mathbf{Z}}}=\frac{1}{n}\sum_{i=1}^n \|\bar{\mathbf{z}}_i(K)-\boldsymbol{\omega}\|^2,
    \] 
    while for Algorithm~\ref{alg:estimation},
    we let
    \[
    \text{MSE}_{\mathbf{x}}=\frac{1}{n} \sum_{i=1}^n (x_i(t)-\hat{\theta})^2, \
    \text{MSE}_{\bar{\mathbf{x}}}=\frac{1}{n} \sum_{i=1}^n (\bar{x}_i(K)-\hat{\theta})^2.
    \]
    The starting point used in  Algorithm~\ref{alg:estimation} is set as $t_0=25$. The results are shown in Fig.~\ref{fig:MSEzx}. It can be seen that the proposed ProbQ-RA outperforms UnifQ and ProbQ in both cases with the quantization resolutions $\Delta=0.2$ and 1. The performances of UnifQ and ProbQ are acceptable for the estimates of the left eigenvector $\boldsymbol{\omega}$ in both cases (see Fig.~\ref{fig:MSEzx}(a) and (b)). However, with the errors accumulated from the first stage to the second stage, they degrade significantly for lower quantization resolutions, e.g., $\Delta=1$ (see Fig.~\ref{fig:MSEzx}(c) and (d)). Compared with UnifQ and ProbQ, the proposed ProbQ-RA  degrades quite smoothly. There is only a modest increase of MSE with decreasing quantization resolution, i.e., increasing $\Delta$ from 0.2 to 1. These results indicate that  the averaging technique can improve the accuracy of the estimates especially for the case of low quantization resolutions, 
    where its smoothing effect contributes much to the improvement.
 
 \begin{figure}
	\begin{center}
		\subfloat[]{\includegraphics[width=1.75in]{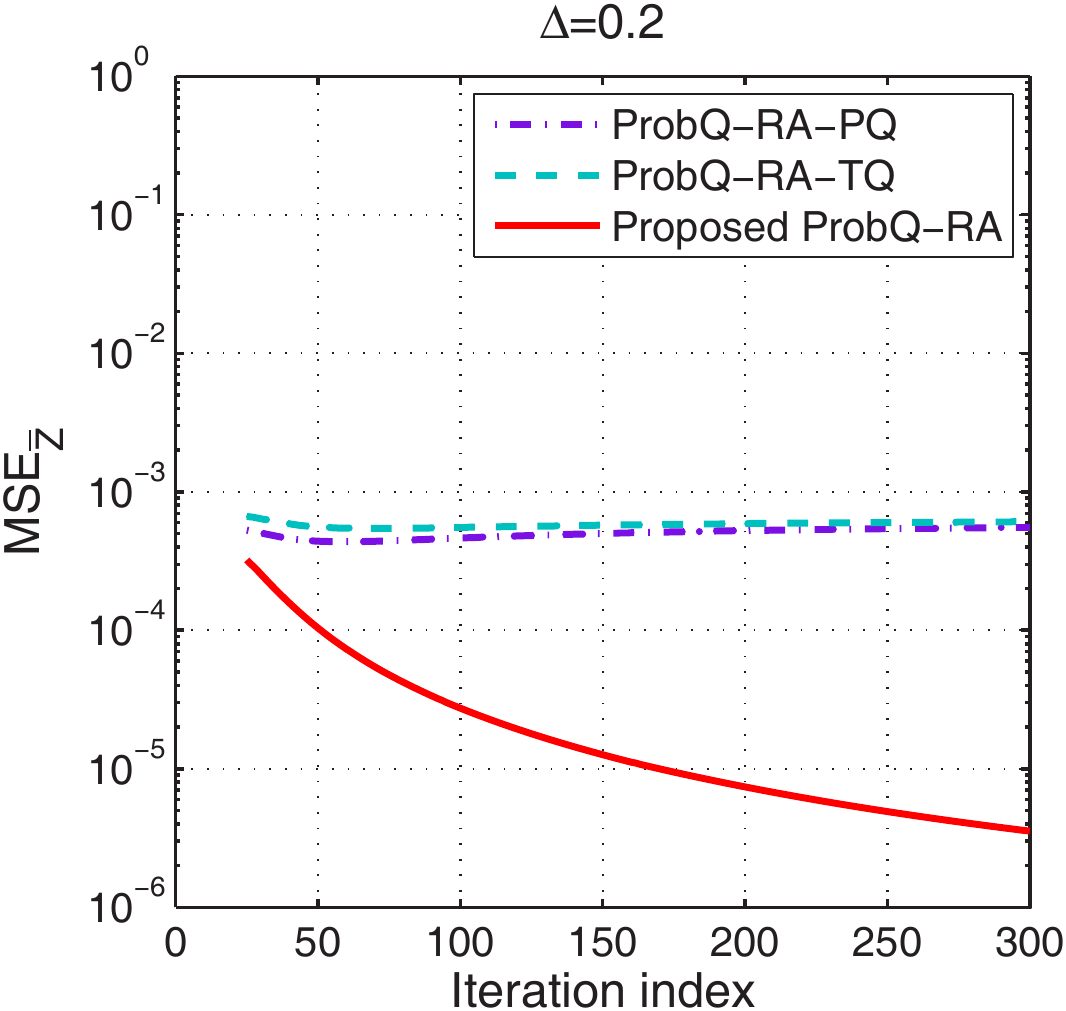}}
		\subfloat[]{\includegraphics[width=1.75in]{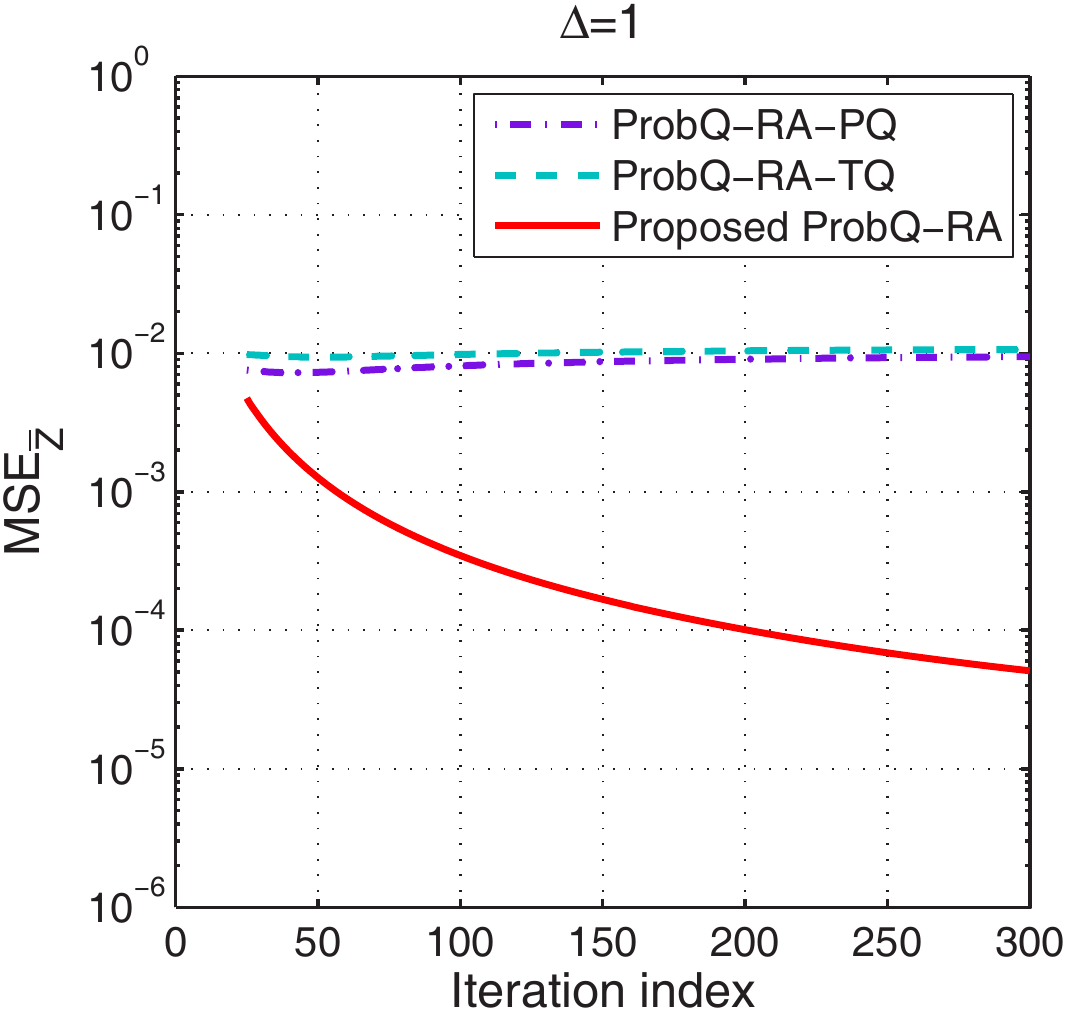}}\\
		\subfloat[]{\includegraphics[width=1.75in]{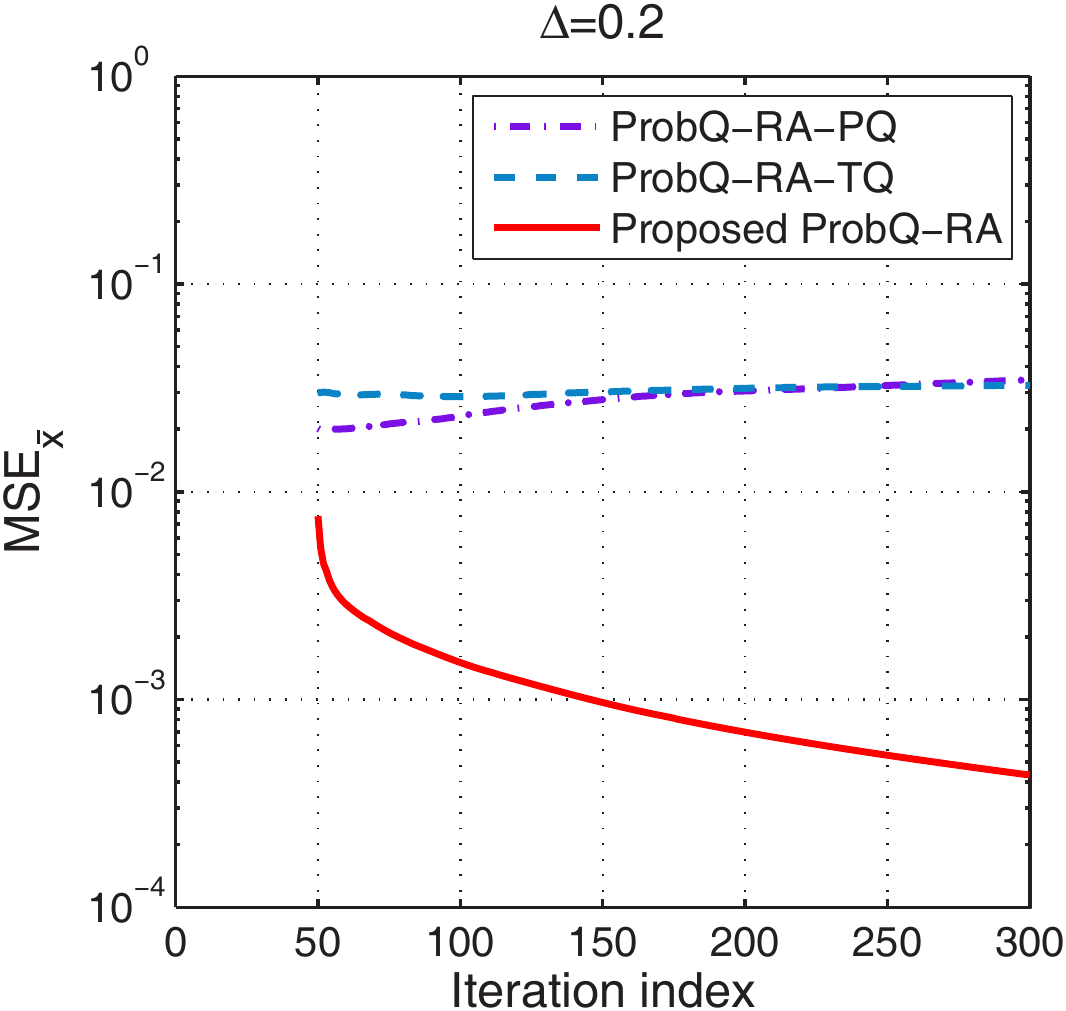}}
		\subfloat[]{\includegraphics[width=1.75in]{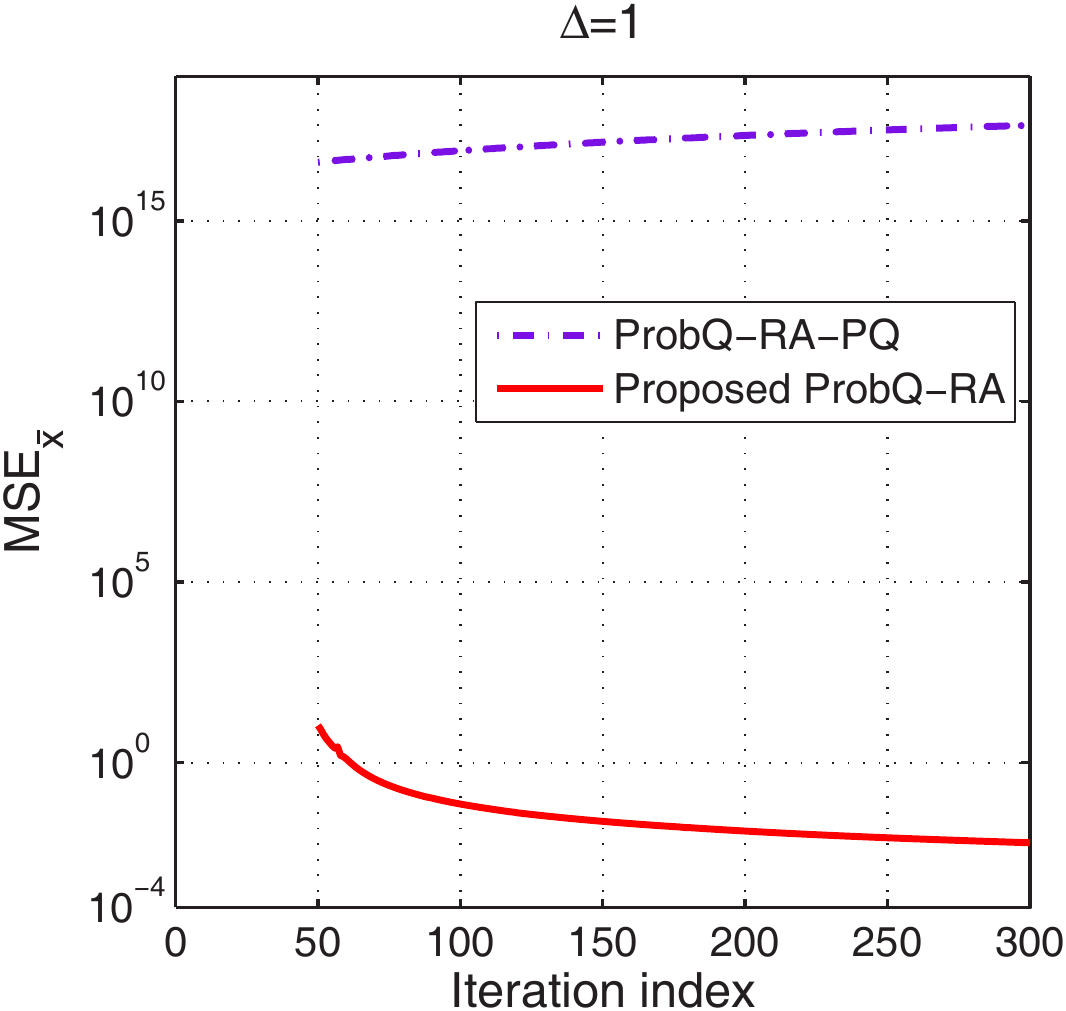}}\\
	\end{center}
	\caption{Comparison of  the mean square errors of ProbQ-RA-PQ, ProbQ-RA-TQ and ProbQ-RA: MSE${}_{\bar{\mathbf{Z}}}$ ((a) and (b)), and MSE${}_{\bar{\mathbf{x}}}$  ((c) and (d)) with respect to $\Delta\in \{0.2,1\}$.}	
	\label{fig:MSEzRA&xRA}
\end{figure} 

\subsection{Comparison with the partially quantized and totally quantized updating rules}
    In Fig.~\ref{fig:MSEzRA&xRA}, we plot the results of the average mean square errors $\text{MSE}_{\bar{\mathbf{Z}}}$ and $\text{MSE}_{\bar{\mathbf{x}}}$ for three updating rules  using  running average, where ProbQ-RA-PQ and ProbQ-RA-TQ denote the averaging based partially quantized (PQ) and totally quantized (TQ) rules \cite{CarFagFraZam10}. From the results, we can see that the averaging based PQ and TQ rules perform well for the left eigenvector estimation 	for both $\Delta=0.2$ and 1.  However, it is observed from Fig.~\ref{fig:MSEzRA&xRA}(c) that the errors are quite large at the second stage even with a rather high quantization resolution $\Delta=0.2$. Moreover, with the quantization resolution decreased from $\Delta=0.2$ to $1$, both PQ and TQ rules do not produce acceptable results (see Fig.~\ref{fig:MSEzRA&xRA}(d))\footnote{As the running average $\bar{z}_{ii}(K)$ of TQ rule doesn't converge and will be zeros many times, the correction term $\epsilon_i(t)$ in \eqref{eq:correction} is meaningless for TQ rule. So we do not provide the data of ProbQ-RA-TQ in Fig.~\ref{fig:MSEzRA&xRA}(d).}: PQ rule diverges and TQ rule doesn't provide any meaningful data for large $\Delta$.  Different from the PQ and TQ rules, the update rule used in \eqref{eq:algorithm} and \eqref{eq:estimateeigenvec} performs quite well for all the cases and the running average can further improve its accuracy. This is consistent with the aforementioned theoretical analysis.

	Finally, we compare UnifQ, ProbQ, ProbQ-RA-PQ, ProbQ-RA-TQ and ProbQ-RA regarding the average mean square error for different quantization resolutions. The results are shown in Fig.~\ref{fig:MSEvsresolution}   (as for ProbQ-RA-TQ, we only plot the results for $\Delta\in\{0.05,0.1,0.2\}$, since no meaningful data can be guaranteed with the same setup as those of the other two updating rules). In order to avoid the transient periods, we take the average of the last 150 iterations of $\text{MSE}_{\bar{\mathbf{x}}}$ in presenting the results. From the figure, we can see that the proposed ProbQ-RA works quite well even when $\Delta=1$. There are significant improvements of the performance at lower quantization resolutions by using ProbQ-RA compared with other algorithms.  The running average technique does improve the performance of PQ and TQ rules for smaller $\Delta$. While for larger $\Delta$, it seems that the running average does not have such effect on PQ and TQ rules.  Although the running average has smoothing effects on random data, the above simulations indicate that only certain kinds of algorithms can benefit from this consequence. 
\begin{figure}[!t]
	\begin{center}
		 \includegraphics[width=2.5in]{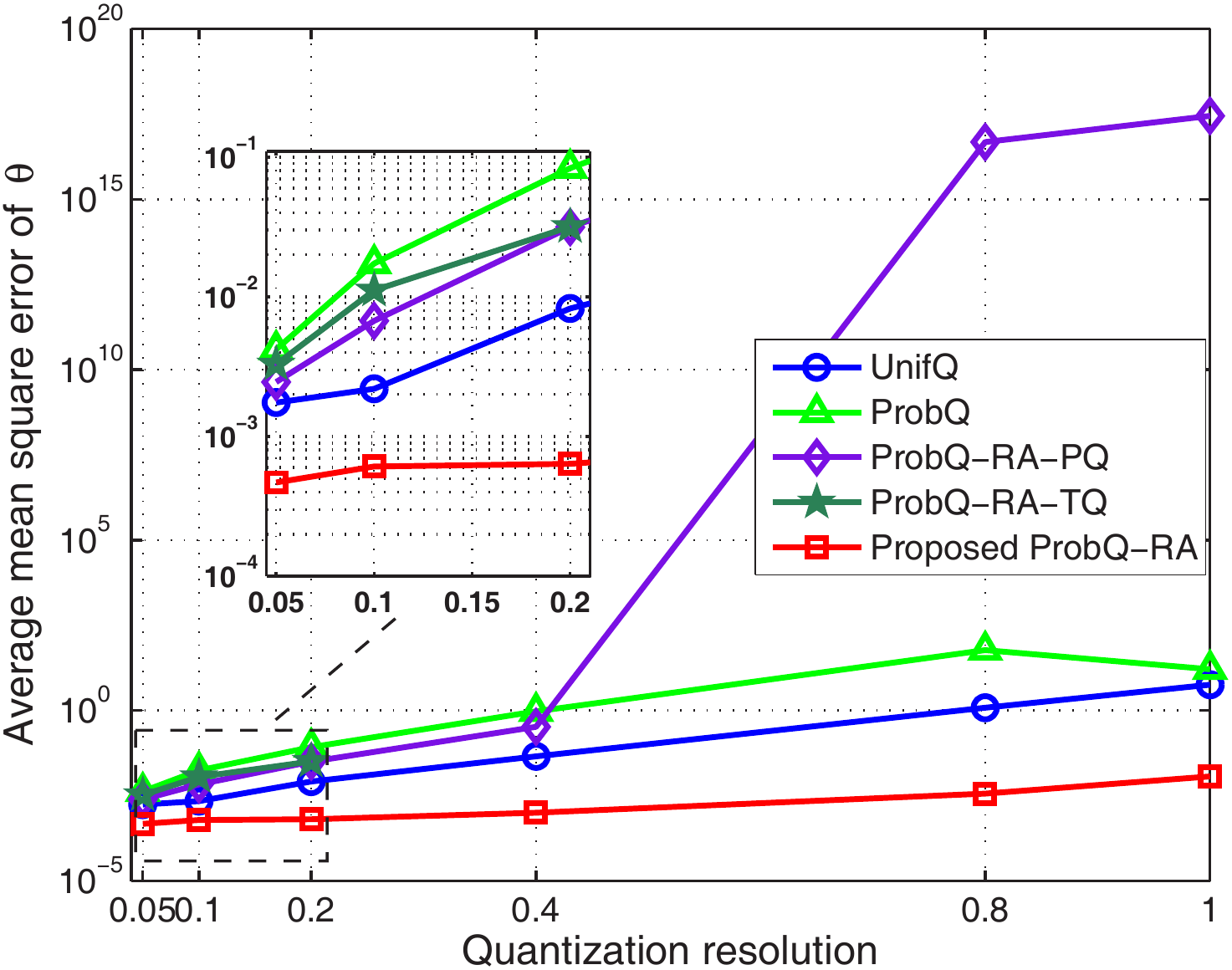}
	\end{center}
	\caption{Average mean square errors of UnifQ, ProbQ, ProbQ-RA, ProbQ-RA-PQ and ProbQ-RA-TQ for different quantization resolutions $\Delta\in \{0.05, 0.1, 0.2, 0.4, 0.8,1\}$.}	
	\label{fig:MSEvsresolution}
\end{figure}

\section{Conclusions and Future Works} 
\label{sec:concluding_remarks}
	We have studied the problem of distributed parameter estimation  over sensor networks in the presence of quantized data and directed communication links. We have proposed a two-stage algorithm such that the centralized sample mean estimate can be achieved in a distributed manner. In the algorithm, the running average technique is utilized  to smear out the randomness caused by the probabilistic quantization scheme. We have shown that the proposed algorithm can achieve the centralized sample mean estimate both in the mean square and almost sure senses.  Finally, we have presented simulation results to illustrate the effectiveness of the proposed algorithm.   Comparisons with other algorithms have also been provided to highlight the improvements of the proposed algorithms.

    Some future directions include the investigation of more efficient algorithms, which are scalable in the network size,  and the effects of other forms of running average on the performance of the algorithm.

\appendices
\section{Proof of Lemma~\ref{lem:matrixproperty}}\label{app:matrixproperty}
	i) Since $\mathcal{G}$ is strongly connected, $\lambda_1(\mathbf{L})=0$ is a simple eigenvalue and $\lambda_i(\mathbf{L})\neq 0$, $\forall i\in\{2,\dots,n\}$. In view of the principle of biorthogonality \cite[p.78]{HorJoh13}, all the eigenvectors $\boldsymbol{\upsilon}_i$ corresponding to $\lambda_i(\mathbf{L})$, $i\in\{2,\dots,n\}$, are orthogonal to $\boldsymbol{\omega}$, that is, $\boldsymbol{\omega}^T\boldsymbol{\upsilon}_i=0$. This implies that $\mathbf{Q}\boldsymbol{\upsilon}_i=(\mathbf{I}-\alpha \mathbf{L})\boldsymbol{\upsilon}_i=(1-\alpha \lambda_i(\mathbf{L})) \boldsymbol{\upsilon}_i$. Hence, $1-\alpha \lambda_i(\mathbf{L})$, $i\in\{2,\dots,n\}$, are eigenvalues of $\mathbf{Q}$. Moreover, it can be verified that $\mathbf{Q}\mathbf{1}=\mathbf{0}$, since $\mathbf{L}\mathbf{1}=\mathbf{0}$ and $\mathbf{1}^T \boldsymbol{\omega}=1$. 

	ii) By i), one has $\rho(\mathbf{Q})=\max_{2\leq i\leq n}|1-\alpha \lambda_i(\mathbf{L})|=\max_{2\leq i\leq n}(\alpha^2 |\lambda_i(\mathbf{L})|^2-2\alpha \text{Re}(\lambda_i(\mathbf{L}))+1)^{1/2}$. Since $\mathcal{G}$ is strongly connected, we know that $\text{Re}(\lambda_i(\mathbf{L}))>0$, $\forall i\in\{2,\dots,n\}$ \cite[Lemma~3.3]{RenBea05}. Hence,  $\rho(\mathbf{Q})<1$ if and only if $0<\alpha<\min_{2\leq i\leq n}\left\{2 \text{Re}(\lambda_i(\mathbf{L}))/|\lambda_i(\mathbf{L})|^2\right\}$.

	iii) It follows from Theorem~1 of \cite{Gau53} that  there is a constant $c_\mathbf{Q}>0$ depending only on $\mathbf{Q}$ so that $\|\mathbf{Q}^k\|_F\leq c_\mathbf{Q} k^{q-1} \sum_{i=1}^n |\lambda_i(\mathbf{Q})|^{k}$.
	It is immediate that $\|\mathbf{Q}^k\|_F\leq nc_\mathbf{Q} k^{q-1}\rho^k(\mathbf{Q})$.

\begin{figure}[!t]
	\begin{center}
		\includegraphics[width=2.5in]{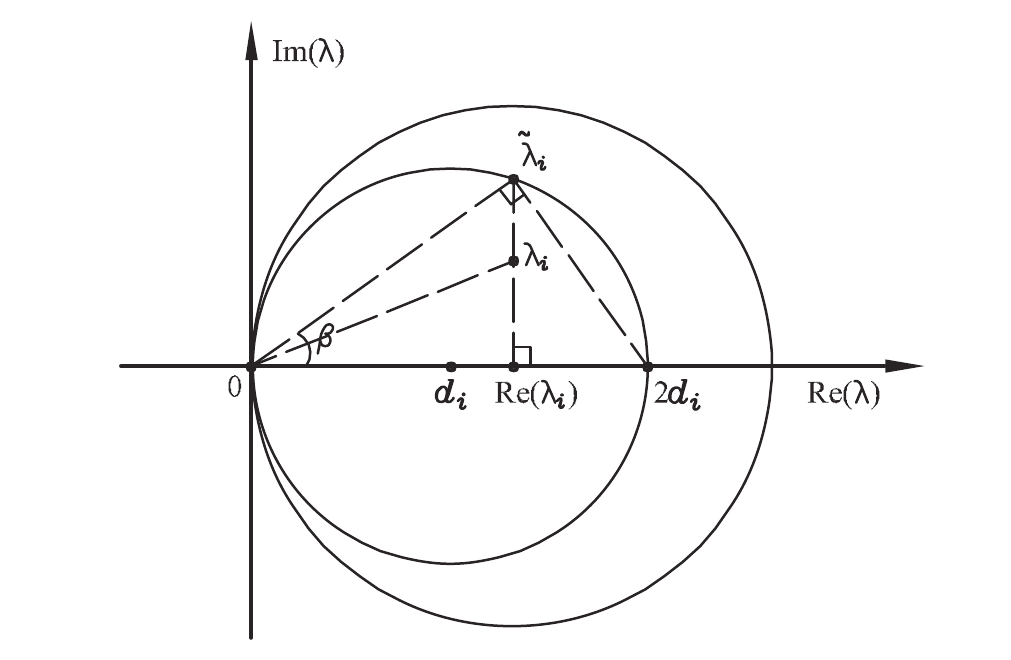}
	\end{center}
	\caption{Illustration of Ger\v{s}gorin discs.}
	\label{fig:gersgorin}
\end{figure}

\section{Proof of Lemma~\ref{lem:parameter}}\label{app:parameter}
	Recall that $\text{Re}(\lambda_i(\mathbf{L}))>0$, $i\in\{2,\dots,n\}$ for strongly connected graphs,  we only need to show that $\max_i d_i\geq \max_{2\leq i\leq n}\left\{|\lambda_i(\mathbf{L})|^2/(2\text{Re}(\lambda_i(\mathbf{L})))\right\}$ in view of Lemma~\ref{lem:matrixproperty}.
	Indeed, by the Ger\v{s}gorin disc theorem \cite[p.388]{HorJoh13}, all the eigenvalues of $\mathbf{L}$ are located in the union of discs $\bigcup_{i=1}^n \bigl\{|\lambda-d_{i}|\leq \sum_{j=1}^n |l_{ij}|\bigr\}$. Consequently, for each $\lambda_i(\mathbf{L})$, $i\in\{2,\dots,n\}$, we can find a $\tilde{\lambda}_i$ located on some circle such that $\text{Re}(\tilde{\lambda}_i)=\text{Re}(\lambda_i(\mathbf{L}))$ and $|\lambda_i(\mathbf{L})|\leq |\tilde{\lambda}_i|$ (see Fig.~\ref{fig:gersgorin}).  This means that 
\begin{equation}\label{eq:gersgorinineq}
	\frac{2\text{Re}(\lambda_i(\mathbf{L}))}{|\lambda_i(\mathbf{L})|^2}\geq \frac{2|\tilde{\lambda}_i|\cos(\beta)}{|\tilde{\lambda}_i|^2}=\frac{2\cos(\beta)}{|\tilde{\lambda}_i|}.
\end{equation}
	On the other hand, we have $\cos(\beta)=|\tilde{\lambda}_i|/ (2d_{i})$, which along with \eqref{eq:gersgorinineq} gives $\min_{2\leq i\leq n}\left\{2\text{Re}(\lambda_i(\mathbf{L}))/|\lambda_i(\mathbf{L})|^2 \right\}\geq (\max_i d_{i})^{-1}$. Hence, by Lemma~\ref{lem:matrixproperty}, we have $\rho(\mathbf{Q})<1$.
    
    It is easy to verify that $\mathbf{P}$ is nonnegative, which implies that $\text{trace}(\mathbf{Q}^k)\geq -1$, $\forall k\in \mathbb{Z}_{\geq 1}$. By Lemma~\ref{lem:matrixproperty}, one then has
\begin{align}
	 \bigl\|\mathbf{I}-\mathbf{Q}^k\bigr\|_F^2&=n-2\ \mbox{tr}(\mathbf{Q}^k) +\|\mathbf{Q}^k\|_F^2\notag\\
	&\leq n+2+n^2 c_\mathbf{Q}^2 k^{2(q-1)}\rho^{2k}(\mathbf{Q}).\label{eq:IQkFbound}
\end{align}
    Consider the case of $q>1$, let $T_*\triangleq (1-q)/ \log \rho(\mathbf{Q})$,  it can be shown that $t^{2(q-1)}\rho^{2t}(\mathbf{Q})$ is monotonically increasing on the interval $(0,T_*]$ and decreasing on $[T_*, \infty)$. Therefore, $k^{2(q-1)}\rho^{2k}(\mathbf{Q})\leq T_*^{2(q-1)}e^{2(1-q)}$, $\forall k\in \mathbb{Z}_{\geq 1}$. Substituting this bound into \eqref{eq:IQkFbound} completes the proof.

\section{Proof of Theorem~\ref{thm:MSconvergence}}\label{app:MSconvergence}
	Using the facts that $\|\mathbf{B}\mathbf{C}\|_F\leq \|\mathbf{C}\|_2\|\mathbf{B}\|_F$ for arbitrary $\mathbf{B}, \mathbf{C}\in \mathbb{R}^{n\times n}$, and $\max_{i} \mathbb{E}\{\|\mathbf{u}_i(t)\|^2\}\leq n\Delta^2/4$ in light of \eqref{eq:varianncequantierror}, one can obtain from \eqref{eq:eZmeansquarebound} that
\begin{align*}
	\mathbb{E}\bigl\{\|\mathbf{e}_{\bar{\mathbf{Z}}}(K)\|_F^2\bigr\}
	&\leq \frac{\|\tilde{\mathbf{Q}}\|_2^2\bigl\|\mathbf{I}-\mathbf{Q}^K\bigr\|_F^2}{K^2}\\
	&\relphantom{\leq}{} + \frac{n\alpha^2\Delta^2\|\tilde{\mathbf{L}}\|_2^2}{4K^2}\sum_{k=1}^{K}\bigl\|\mathbf{I}-\mathbf{Q}^{k}\bigr\|_F^2.
\end{align*}
It thus follows from Lemma~\ref{lem:parameter} that 
\begin{align}
	\mathbb{E}\bigl\{\|\mathbf{e}_{\bar{\mathbf{Z}}}(K)\|_F^2\bigr\}&\leq \frac{n\nu^2}{4K} +\frac{c_{\mathbf{Q},n}^2\|\tilde{\mathbf{Q}}\|_2^2}{K^2}\notag\\
	&\relphantom{\leq}{}+\frac{\nu^2 n^3c_\mathbf{Q}^2}{4(n+2)K^2}\sum_{k=1}^{K} k^{2(q-1)} \rho^{2k}(\mathbf{Q}). \label{eq:MSdebound}
\end{align}

	Since $\rho(\mathbf{Q})<1$ by Lemma~\ref{lem:parameter},  we have $\lim_{k\to \infty} (1+1/k)^{2(q-1)}\rho^2(\mathbf{Q})=\rho^2(\mathbf{Q})<1$, which shows that 
	$\sum_{k=1}^{t} k^{2(q-1)} \rho^{2k}(\mathbf{Q})$ is a convergent series.
	Thus it follows from \eqref{eq:MSdebound} that	for all large $K$,
\[
	\mathbb{E}\left\{\|\mathbf{e}_{\bar{\mathbf{Z}}}(K)\|_F^2\right\}\leq \frac{n\nu^2}{4K}+o\left(K^{-1}\right),
\]
	from which the theorem follows.

\section{Proof of Theorem~\ref{thm:eZas}}\label{app:eZas}
	We will make use of the following result.

	\emph{Lemma D.1:}
	Suppose that $\rho(\mathbf{Q})<1$, then for any $q\in \mathbb{Z}_{\geq 1}$,  the series $\sum_{k=1}^{t} k^{q-1}\rho^{k}(\mathbf{Q})\leq c_{\mathbf{Q}}'$, $\forall t\in \mathbb{Z}_{\geq 1}$, where $c_{\mathbf{Q}}'$ is defined in Theorem~\ref{thm:eZas}.
\begin{proof}
    The case of $q=1$ is straightforward. We only need to consider the case of $q>1$. From the proof of Lemma~\ref{lem:parameter}, we know that  $f(t)\triangleq t^{q-1}\rho^{t}(\mathbf{Q})$ is monotonically increasing on $(0,T_*]$ and decreasing on $[T_*, \infty)$, where $T_*\triangleq (1-q)/ \log \rho(\mathbf{Q})$. By exploiting this monotone property, one can show that
\begin{equation}\label{eq:MSboundintegral}
	\sum_{k=1}^{t} f(k) \leq f(T_*)+\int_{1}^t f(s) \mathrm{d} s.
\end{equation}
	On the other hand, by repeatedly using  integration by parts, we have 
\begin{equation*}
	\int_{1}^t f(s) \mathrm{d} s=\sum_{k=0}^{q-1}(-1)^{q-k+1} \frac{(q-1)!}{k!}\frac{t^{k}\rho^{t}(\mathbf{Q}) - \rho(\mathbf{Q})}{(\log \rho(\mathbf{Q}))^{q-k}}.
\end{equation*}
	Substituting the above relation into \eqref{eq:MSboundintegral} and noting that $t^{k}\rho^{t}(\mathbf{Q})\to 0$ as $t\to \infty$ completes the proof.
\end{proof}

    \emph{Proof of Theorem~\ref{thm:eZas}:}
	By \eqref{eq:boundquantierror}, one has $\|\mathbf{U}(k)\|_F\leq n \Delta$. Let $g_K\triangleq \|\sum_{k=0}^{K-1} \mathbf{U}(k)\|_F$, then 
\begin{equation}\label{eq:LILdecompose}
	\left\|\sum_{k=0}^{K-1}\mathbf{W}_K(k)\tilde{\mathbf{L}}\mathbf{U}(k)\right\|_F \leq \|\tilde{\mathbf{L}}\|_2\left(n \Delta \sum_{k=1}^{K} \|\mathbf{Q}^k\|_F +g_K\right).
\end{equation}
	By Lemmas~\ref{lem:matrixproperty} and D.1, one can obtain   
\begin{equation}
	\sum_{k=1}^{K} \|\mathbf{Q}^k\|_F\leq nc_\mathbf{Q}\sum_{k=1}^{K} k^{q-1}\rho^k(\mathbf{Q})\leq nc_\mathbf{Q} c_{\mathbf{Q}}'. \label{eq:sumQkbound}
\end{equation}
    This implies that the first term of the RHS of \eqref{eq:LILdecompose} is bounded.

	It remains to provide the quantitative bound of $g_K$. By the definition of $\mathbf{U}(t)$, it can be verified that 
\begin{equation}\label{eq:Fro2normequality}
	g_K\leq \sum_{i=1}^n\left\|\sum_{k=0}^{K-1} \mathbf{u}_i(k)\right\|.
\end{equation}

	Now considering each $i\in\{1,\dots,n\}$, we have the next two cases:

	\underline{\emph{Case I.}} $\sup_{K\geq 1}r_K^\mathbf{U}<\infty$. It is obvious that $\mathbb{E}\{\|\mathbf{u}_i(k)\|^2\}=\text{trace}(\text{Cov}(\mathbf{u}_i(k)))$, $\forall k\in \mathbb{Z}_{\geq 0}$. Consequently, we have 
\[
	\sum_{k=0}^{K-1} \mathbb{E}\{\|\mathbf{u}_i(k)\|^2\}=\text{trace}\left(\sum_{k=0}^{K-1}\text{Cov}(\mathbf{u}_i(k))\right)\leq  n\sup_{K\geq 1}r_K^\mathbf{U}.
\]
	Recall that $\{\mathbf{u}_i(k)\}_{k\geq 0}$ is a sequence of independent bounded random vectors. By employing the Kolmogorov three series theorem \cite[p.89]{BulSol97}, we know that
	$\sum_{k=0}^{K-1} \mathbf{u}_i(k)$ converges a.s. as $K\to \infty$. Thus there exists a constant $c_\mathbf{U}>0$ so that 
\begin{equation}\label{eq:boundui}
    \max_{i}\left\|\sum_{k=0}^{K-1} \mathbf{u}_i(k)\right\|\leq c_\mathbf{U} \ \  \text{a.s.},\ \forall K\in \mathbb{Z}_{\geq 1}.
\end{equation}

    Substituting \eqref{eq:sumQkbound}, \eqref{eq:Fro2normequality} and \eqref{eq:boundui} into \eqref{eq:LILdecompose} implies that $\|\sum_{k=0}^{K-1}\mathbf{W}_K(k)\tilde{\mathbf{L}}\mathbf{U}(k)\|_F \leq n(nc_{\mathbf{Q}}c_{\mathbf{Q}}'\Delta +c_\mathbf{U})\|\tilde{\mathbf{L}}\|_2$ a.s.. Moreover, $t^k\rho^{t}(\mathbf{Q})\to 0$ as $t\to \infty$ for all $k\in \mathbb{Z}_{\geq 0}$.  It then follows from \eqref{eq:eZasbound} that for large $K$,
\[
	\|\mathbf{e}_{\bar{\mathbf{Z}}}(K)\|_F\leq \frac{\mu}{K}+o\left(K^{-1}\right)\ \ \mbox{a.s.}
\]

	\underline{\emph{Case II.}} $\lim_{K\to \infty}r_K^\mathbf{U}=\infty$. In this case, there is $i_0\in\{1,\dots,n\}$ satisfying $r_{Ki_0}^\mathbf{U}\triangleq \lambda_{\max}\bigl(\sum_{k=0}^{K-1} \text{Cov}(\mathbf{u}_{i_0}(k))\bigr)\to \infty$ as $K\to \infty$. Hence $\log\log r_{Ki_0}^\mathbf{U}=o(r_{Ki_0}^\mathbf{U})$ as $K\to \infty$.
	On the other hand, one obtains $\|\mathbf{u}_{i_0}(k)\|^2\leq n\Delta^2$, $\forall k\in \mathbb{Z}_{\geq 0}$.  It thus follows from Theorem~1.1 of \cite{Chen93} that 
\begin{equation}\label{eq:asLTL1}
	\limsup_{K\to \infty} \frac{\left\|\sum_{k=0}^{K-1} \mathbf{u}_{i_0}(k)\right\|}{\sqrt{2r_{Ki_0}^\mathbf{U} \log\log r_{Ki_0}^\mathbf{U}}}\leq 1 \ \ \mbox{a.s.}
\end{equation}

	Now, invoking \eqref{eq:Fro2normequality}, \eqref{eq:boundui} and noting that $\|\sum_{k=0}^{K-1}\text{Cov}(\mathbf{u}_i(k))\|_2\leq r_K^\mathbf{U}$, $\forall i$, gives 
\begin{equation}\label{eq:asLIL}
	g_K\leq n c_\mathbf{U}+(n+o(1))\sqrt{2 r_K^\mathbf{U} \log\log r_K^\mathbf{U}}, 
\end{equation}
	for all large $K$.
	Substituting \eqref{eq:LILdecompose}, \eqref{eq:sumQkbound} and \eqref{eq:asLIL} into \eqref{eq:eZasbound}, we finally get for large $K$,
\begin{multline*}
	\|\mathbf{e}_{\bar{\mathbf{Z}}}(K)\|_F\leq \alpha n \|\tilde{\mathbf{L}}\|_2  K^{-1}\sqrt{2 r_K^\mathbf{U} \log\log r_K^\mathbf{U}}\\+o \left( K^{-1}\sqrt{r_K^\mathbf{U} \log\log r_K^\mathbf{U}} \right).
\end{multline*} 

	Combining the above two cases completes the proof.

\section{Proof of Lemma~\ref{lem:correctionconvergence}} \label{app:correctionconvergence}
	By \eqref{eq:correction}, for each $i\in \{1,2,\dots,n\}$,  we have 
\begin{equation}\label{eq:correctionineq}
	|\epsilon_i(t)|=\frac{|y_i|}{n}\frac{|\bar{z}_{ii}(t+1)-\bar{z}_{ii}(t) |}{|\bar{z}_{ii}(t+1)||\bar{z}_{ii}(t)|}, \ \forall t\in \mathbb{Z}_{\geq 1}.
\end{equation}
	It then follows from \eqref{eq:correctionineq} and Theorem~\ref{thm:welldefined} that 
\begin{align*}
	\mathbb{E}\bigl\{\epsilon_i^2(t)\bigr\}&\leq \frac{y_i^2}{n^2 \eta^4\omega_i^4}\mathbb{E}\left\{ |\bar{z}_{ii}(t+1)-\bar{z}_{ii}(t) |^2 \right\}.
\end{align*}
	Note that $\mathbb{E}\left\{ |\bar{z}_{ii}(t+1)-\bar{z}_{ii}(t) |^2 \right\}\leq 2\mathbb{E}\{|\bar{z}_{ii}(t+1)-\omega_i|^2\}+2\mathbb{E}\{|\bar{z}_{ii}(t)-\omega_i|^2\}$,
	we use \eqref{eq:barz-wi_errineq} and Theorem~\ref{thm:MSconvergence} to obtain  
\[
	\mathbb{E}\bigl\{\|\boldsymbol{\epsilon}(t)\|^2\bigr\} \leq \frac{\nu^2 \max_{i} \omega_i^{-4}|y_i|^2}{n\eta^4}\frac{1}{t} +o\left(t^{-1}\right).
\]

	We now turn to the second part. The almost sure convergence follows from \eqref{eq:correctionineq}, Theorems~\ref{thm:eZas} and \ref{thm:welldefined}. To establish the upper bound, by \eqref{eq:correctionineq} and Theorem~\ref{thm:welldefined}, we have
\begin{align*}
	\|\boldsymbol{\epsilon}(t)\|&\leq \Biggl(\sum_{i=1}^{n}\frac{y_i^2}{n^2 \eta^4\omega_i^4} |\bar{z}_{ii}(t+1)-\bar{z}_{ii}(t)|^2\Biggr)^{1/2}\notag\\
	&\leq \frac{\sqrt{2}\max_{i} \omega_i^{-2}|y_i|}{n\eta^2}\bigl( \|\mathbf{e}_{\bar{\mathbf{Z}}} (t+1)\|_F+\|\mathbf{e}_{\bar{\mathbf{Z}}}(t)\|_F \bigr),
\end{align*}
	where the last inequality follows from \eqref{eq:barz-wi_errineq}. 
	Therefore, applying Theorem~\ref{thm:eZas} to the previous relation completes the proof.

\section{Proof of Lemma~\ref{lem:weightsum}} \label{app:MSweightsum}
	By \eqref{eq:compensate}, we can employ the H\"older inequality to obtain
\[
	 e_{\mathbf{x}}^2(t)\leq \frac{1}{n^2}\sum_{i=1}^n \frac{y_i^2}{\bar{z}_{ii}^2(t)}\sum_{i=1}^n \left( \omega_i-\bar{z}_{ii}(t) \right)^2,
\]
	which together with \eqref{eq:barz-wi_errineq} and \eqref{eq:barziibound} implies that  
\[
    |e_{\mathbf{x}}(t)|\leq \frac{y''}{\eta\sqrt{n}} \|\mathbf{e}_{\bar{\mathbf{Z}}}(t)\|_F, \ \ 
	\mathbb{E}\bigl\{ e_{\mathbf{x}}^2(t)\bigr\}\leq \frac{y''^2}{n\eta^2} \mathbb{E}\{\|\mathbf{e}_{\bar{\mathbf{Z}}}(t)\|_F^2\}.
\]
	The lemma thus follows from Theorems~\ref{thm:MSconvergence} and \ref{thm:eZas}.

\section{Proof of Theorem~\ref{lem:MSbarxx_eK}}\label{app:MStilde_eK}
	First, we can obtain from \eqref{eq:tilde_eK} that $\mathbb{E}\bigl\{\|\mathbf{e}_{\bar{\mathbf{x}},\mathbf{x}}(K)\|^2 \bigr\}\leq 3 K^{-2} \left( \mathbb{E}\{\|\mathcal{I}_1\|^2\}+\mathbb{E}\{\|\mathcal{I}_2\|^2\}+\mathbb{E}\{\|\mathcal{I}_3\|^2\}\right)$. 

	Considering $\mathbb{E}\bigl\{\|\mathcal{I}_1\|^2\bigr\}$, by Assumption 2, one has
\begin{equation}\label{eq:MSboundcalI10}
	\begin{split}
	\mathbb{E}\bigl\{\|\mathcal{I}_1\|^2\bigr\}&=\bigl\|\tilde{\mathbf{Q}}(\mathbf{I}-\mathbf{Q}^K)\bigr\|_2^2\|\mathbf{y}\|^2\\
	&\relphantom{=}{} +\alpha^2 \sum_{k=0}^{K-1}\mathbb{E}\left\{\bigl\|\mathbf{W}_K(k)\tilde{\mathbf{L}}\mathbf{v}(k)\bigr\|^2\right\}.
	\end{split}
\end{equation}
	Let $\text{Cov}^{1/2}(\mathbf{v}(t))$ be the square root of $\text{Cov}(\mathbf{v}(t))$, then it follows from \eqref{eq:varianncequantierror} and the relation $\|\text{Cov}^{1/2}(\mathbf{v}(t))\|_2\leq \max_{i} \sqrt{\mathbb{E}\{v_i^2\}}$ that
\begin{align}\label{eq:eq:MSboundcalI11}
	\mathbb{E}\left\{\bigl\|\mathbf{W}_K(k)\tilde{\mathbf{L}}\mathbf{v}(k)\bigr\|^2\right\}&=\bigl\|\mathbf{W}_K(k)\tilde{\mathbf{L}}\text{Cov}^{1/2}(\mathbf{v}(k))\bigr\|_F^2\notag\\
	&\leq  \frac{\Delta^2\|\tilde{\mathbf{L}}\|_2^2}{4} \bigl\|\mathbf{W}_K(k)\bigr\|_F^2.
\end{align}
	Moreover, we derive from Lemma~\ref{lem:parameter} that
\begin{equation}\label{eq:sumtildeW}
	\sum_{k=0}^{K-1}\bigl\|\mathbf{W}_K(k)\bigr\|_F^2
	= \sum_{k=1}^{K}\|\mathbf{I}-\mathbf{Q}^k\|_F^2\leq (n+2) K+O(1),
\end{equation}
   since $\sum_{k=1}^{t} k^{2(q-1)} \rho^{2k}(\mathbf{Q})$ is a convergent series.
   Hence, substituting \eqref{eq:eq:MSboundcalI11} and \eqref{eq:sumtildeW} back into \eqref{eq:MSboundcalI10} yields
\begin{equation}\label{eq:MSboundcalI1}
	\mathbb{E}\bigl\{\|\mathcal{I}_1\|^2\bigr\}\leq 
	\frac{\nu^2 K}{4}+O(1).
\end{equation}

	As for $\mathbb{E}\bigl\{\|\mathcal{I}_{2}\|^2\bigr\}$, by Lemma~\ref{lem:correctionconvergence}, we know that  $\sup_{t\geq 0}\mathbb{E}\bigl\{\|\boldsymbol{\epsilon}(t)\|^2\bigr\}$ is bounded. Hence, there is an integer $k_*>0$ such that 
\[
    \sum_{k=0}^{K-1} \mathbb{E}\{\|\boldsymbol{\epsilon}(k)\|^2\}\leq O(1)+\frac{\nu^2 y'^2}{n \eta^4}\sum_{k=k_*}^{K-1}\frac{1}{k}.
\]
    On the other hand, for any two integers $t_1,t_2\in \mathbb{Z}_{\geq 1}$ with $t_2>t_1$, we have the following relation
\begin{equation}\label{eq:sumharmonicseri}
	\sum_{t=t_1}^{t_2}\frac{1}{t}\leq \int_{t_1}^{t_2} \frac{1}{t} \mathrm{d} t+\frac{1}{t_1}=\log\left(\frac{t_2}{t_1}\right)+\frac{1}{t_1}.
\end{equation}
	It thus follows from the $c_r$ inequality \cite[p.127]{Gut13} and Lemma~\ref{lem:parameter} that for large $K$,
\begin{align}
	\mathbb{E}\bigl\{\|\mathcal{I}_{2}\|^2\bigr\}
	&\leq K c_{\mathbf{Q},n}^2\|\tilde{\mathbf{Q}}\|_2^2\sum_{k=0}^{K-1} \mathbb{E}\bigl\{\|\boldsymbol{\epsilon}(k)\|^2\bigr\}\notag\\
	&\leq  \frac{c_{\mathbf{Q},n}^2\nu^2 y'^2 \|\tilde{\mathbf{Q}}\|_2^2}{n\eta^4}   K \log(K-1)+o\left(K \log K\right).\label{eq:MSboundcalI2}
\end{align}

	Let us turn to $\mathbb{E}\bigl\{\|\mathcal{I}_{3}\|^2\bigr\}$. We can obtain 
\begin{align}
	\mathbb{E}\bigl\{\|\mathcal{I}_{3}\|^2\bigr\}&\overset{(a)}{\leq} \frac{K}{n}\sum_{k=1}^{K}  \mathbb{E}\bigl\{(\mathbf{y}^T\boldsymbol{\varepsilon}_{K}(k))^2\bigr\}\notag\\
	&\overset{(b)}{\leq} \frac{2y''^2 K}{n\eta^4}\sum_{k=1}^K \sum_{i=1}^n \mathbb{E}\{|\bar{z}_{ii}(k)-\bar{z}_{ii}(K) |^2\}\notag\\
	&\overset{(c)}{\leq} \frac{2y''^2 K}{n\eta^4} \Biggl(K \mathbb{E} \left\{ \|\mathbf{e}_{\bar{\mathbf{Z}}}(K)\|_F^2 \right\}\Biggr.\notag\\
	&\relphantom{\leq \frac{2y''^2t}{n\eta^4} \Biggl(}{} \Biggl.+\sum_{k=1}^{K}\mathbb{E} \left\{ \|\mathbf{e}_{\bar{\mathbf{Z}}}(k)\|_F^2\right\}\Biggr)\notag\\
	&\overset{(d)}{\leq} \frac{\nu^2 y''^2}{2\eta^4}K \log K+o\left(K\log K\right),\label{eq:MSboundcalI3}
\end{align}
	where  $(a)$ follows from the $c_r$ inequality \cite[p.127]{Gut13}, $(b)$ follows from  Theorem~\ref{thm:welldefined}, $(c)$ is due to \eqref{eq:barz-wi_errineq}, and $(d)$ is obtained by using Theorem~\ref{thm:MSe_Kbound} and \eqref{eq:sumharmonicseri}.

	Combining the bounds \eqref{eq:MSboundcalI1}, \eqref{eq:MSboundcalI2} and \eqref{eq:MSboundcalI3} all together, after some simplifications, we finally complete the proof.

\section{Proof of Theorem~\ref{thm:ase_Kbound}}\label{app:ase_Kbound}
	We only need to consider $\|\mathbf{e}_{\bar{\mathbf{x}},\mathbf{x}}(K)\|$. First, it is trivial that $\|\mathbf{e}_{\bar{\mathbf{x}},\mathbf{x}}(K)\|\leq K^{-1}\left(\|\mathcal{I}_1\|+\|\mathcal{I}_{2}\|+\|\mathcal{I}_{3}\|\right)$.

	For $\|\mathcal{I}_1\|$, by Lemma~\ref{lem:parameter}, we know that $\|\mathbf{I}-\mathbf{Q}^k\|_2\leq \|\mathbf{I}-\mathbf{Q}^k\|_F\leq c_{\mathbf{Q},n}$, $\forall k\in \mathbb{Z}_{\geq 1}$. Let $h_K\triangleq \|\sum_{k=0}^{K-1} \mathbf{v}(k)\|$, one can obtain from \eqref{eq:tilde_eK} that
\begin{align}
	\|\mathcal{I}_1\|&\leq c_{\mathbf{Q},n}  \|\tilde{\mathbf{Q}}\mathbf{y}\|+\alpha\|\tilde{\mathbf{L}}\|_2 \left( \Delta \sum_{k=1}^{K} \|\mathbf{Q}^k\|_2 +h_K\right)\notag\\
	&\leq  \alpha \|\tilde{\mathbf{L}}\|_2 h_K+O(1),\label{eq:ascalI1}
\end{align}
	where we use  \eqref{eq:boundquantierror} and \eqref{eq:sumQkbound}.

	To establish the rate of convergence of $h_K$, we use  similar arguments as that of the proof of Theorem~\ref{thm:eZas}. We consider two cases, separately.  

	\underline{\emph{Case Ia.}} $\sup_{K\geq 1}r_K^{\mathbf{v}}<\infty$. Note that $\mathbf{v}(t)$ is uniformly bounded in light of \eqref{eq:boundquantierror}. Under Assumption 2, the Kolmogorov three series theorem for random vectors \cite[p.89]{BulSol97} applies, and we know that $\sum_{k=0}^{K-1} \mathbf{v}(k)$ converges almost surely as $K$ tends to $\infty$. In particular, there exists a constant $c_\mathbf{v}>0$ so that for all integers $K\in \mathbb{Z}_{\geq 1}$, we have $h_K\leq c_{\mathbf{v}}$ a.s.

	\underline{\emph{Case Ib.}} $\lim_{K\to \infty}r_K^{\mathbf{v}}=\infty$. In this case, similar to \eqref{eq:asLTL1}, one has
\[
	\limsup_{K\to \infty} \frac{h_K}{\sqrt{2r_{K}^\mathbf{v} \log\log r_{K}^\mathbf{v}}}\leq 1 \ \ \text{a.s.}
\]
	Substituting the above two cases into \eqref{eq:ascalI1} implies that $\|\mathcal{I}_1\|$ is approximately bounded by 
\begin{equation}\label{eq:ascalI1bound}
	\begin{split}
	\|\mathcal{I}_1\|
	&\leq 
	\begin{cases} \alpha  c_{\mathbf{v}}\|\tilde{\mathbf{L}}\|_2+O(1), & \sup r_K^{\mathbf{v}}<\infty,\\
	\alpha\|\tilde{\mathbf{L}}\|_2 \sqrt{2r_{K}^{\mathbf{v}} \log\log r_{K}^{\mathbf{v}}} , & \text{otherwise}.
	\end{cases}
	\end{split}
\end{equation}

	Now turning to $\|\mathcal{I}_{2}\|$, similar to \eqref{eq:MSboundcalI2}, we can get  
\begin{equation}\label{eq:ascalI2bound}
	\|\mathcal{I}_{2}\|\leq  c_{\mathbf{Q},n}\|\tilde{\mathbf{Q}}\|_2\sum_{k=0}^{K-1} \|\boldsymbol{\epsilon}(k)\|+ O(1).
\end{equation}
   As for $\|\mathcal{I}_{3}\|$, we can show that  $\|\mathbf{1}\mathbf{y}^T\boldsymbol{\varepsilon}_K(k)\|\leq \sqrt{2n} \eta^{-2}y''\bigl( \|\mathbf{e}_{\bar{\mathbf{Z}}}(k)\|_F+\|\mathbf{e}_{\bar{\mathbf{Z}}}(K)\|_F \bigr)$,
	which together with Theorem~\ref{thm:eZas} implies 
\begin{equation}\label{eq:ascalI3bound}
  \|\mathcal{I}_{3}\|\leq \frac{\sqrt{2} y''}{\eta^2 \sqrt{n}}\Biggl( K\|\mathbf{e}_{\bar{\mathbf{Z}}}(K)\|_F +\sum_{k=1}^{K}\|\mathbf{e}_{\bar{\mathbf{Z}}}(k)\|_F\Biggr).
\end{equation}

	Let $\varpi\triangleq 2 c_{\mathbf{Q},n}y'\|\tilde{\mathbf{Q}}\|_2+\sqrt{n}y''$, then applying Theorem~\ref{thm:eZas} and Lemma~\ref{lem:correctionconvergence} to \eqref{eq:ascalI2bound} and \eqref{eq:ascalI3bound} leads to the next two cases:

	\underline{\emph{Case IIa.}} $\sup_{K\geq 1} r_K^{\mathbf{U}}<\infty$. In this case, we can obtain
\[
	\|\mathcal{I}_{2}+\mathcal{I}_{3}\|\leq 
	\frac{\sqrt{2}\mu \varpi }{n\eta^2 } \log K+o(\log K).
\]

    \underline{\emph{Case IIb.}} $\lim_{K\to \infty}r_{K}^\mathbf{U}=\infty$.
	 Noting that $r_{K}^\mathbf{U} \log\log r_{K}^\mathbf{U}$ is  monotonically increasing of $K$, one has
\begin{multline*}
	\|\mathcal{I}_{2}+\mathcal{I}_{3}\|\leq \frac{2\alpha \varpi\|\tilde{\mathbf{L}}\|_2}{\eta^2} \sqrt{r_{K}^\mathbf{U} \log\log r_{K}^\mathbf{U}}\log K\\ +o\left(\sqrt{r_{K}^\mathbf{U} \log\log r_{K}^\mathbf{U}} \log K \right).
\end{multline*}

	Based on the above discussion, we have the next four cases about the approximate upper bounds of $\|\mathbf{e}_{\bar{\mathbf{x}},\mathbf{x}}(K)\|$, i.e., Case Ia\&IIa, Case Ia\&IIb, Case Ib\&IIa, Case Ib\&IIb.
	Note that $(r_{K}^\mathbf{U} \log\log r_{K}^\mathbf{U})^{1/2}$, $(r_K^{\mathbf{v}} \log\log r_K^{\mathbf{v}})^{1/2}$ are both increasing functions of $K$. The above four cases together with Lemma~\ref{lem:weightsum} complete the proof. 

\section*{Acknowledgment}
The authors would like to thank the anonymous reviewers for their
valuable comments and suggestions to improve the quality and
clarity of this manuscript.

\ifCLASSOPTIONcaptionsoff
\newpage
\fi


\begin{IEEEbiography}
{Shanying Zhu}
received the B.S. degree in information
and computing science from North China University of Water Resources and Electric Power, Zhengzhou, China, in 2006, the M.S. degree in
applied mathematics from Huazhong University of Science and Technology, Wuhan, China, in 2008, and the Ph.D. degree in control theory and control
engineering from Shanghai Jiao Tong University, Shanghai, China, in 2013. Since July 2013, he has been a research fellow with the School of Electrical
and Electronic Engineering, Nanyang Technological University, Singapore. His research interests focus on multi-agent systems and wireless sensor networks, particularly in coordination control of mobile robots and distributed detection and estimation in sensor networks and their applications in industrial networks.
\end{IEEEbiography}

\begin{IEEEbiography}
{Yeng Chai Soh}
(M'87-SM'06) received the B.Eng. (Hons. I) degree in electrical and electronic engineering from the University of Canterbury, New Zealand, and the Ph.D. degree in electrical engineering from the University of Newcastle, Australia. He joined the Nanyang Technological University, Singapore, after his PhD study and is currently a professor in the School of Electrical and Electronic Engineering. Dr Soh has served as the Head of the Control and Instrumentation Division, the Associate Dean (Research and Graduate Studies) and the Associate Dean (Research) at the College of Engineering. Dr Soh has served as panel members of several national grants and scholarships evaluation committees.

Dr Soh's current research interests are in robust control and applications, robust estimation and filtering, optical signal processing, and energy efficient systems. He has published more than 230 refereed journal papers in these areas. His research achievements in optical signal processing have won him several national and international awards. 
\end{IEEEbiography}

\begin{IEEEbiography}
{Lihua Xie}
(S'91-M'92-SM'97-F'07) received the B.E. and M.E. degrees in electrical engineering from Nanjing University of Science and Technology in 1983 and 1986, respectively, and the Ph.D. degree in electrical engineering from the University of Newcastle, Australia, in 1992. Since 1992, he has been with the School of Electrical and Electronic Engineering, Nanyang Technological University, Singapore, where he is currently a professor and served as the Head of Division of Control and Instrumentation from July 2011 to June 2014. He held teaching appointments in the Department of Automatic Control, Nanjing University of Science and Technology from 1986 to 1989 and Changjiang Visiting Professorship with South China University of  Technology from 2006 to 2011. 

Dr Xie's research interests include robust control and estimation, networked control systems, multi-agent networks, and unmanned systems. In these areas, he has published over 260 journal papers and co-authored two patents and six books. He has served as an editor of IET Book Series in Control and an Associate Editor of a number of journals including IEEE Transactions on Automatic Control, Automatica, IEEE Transactions on Control Systems Technology, and IEEE Transactions on Circuits and Systems-II. Dr Xie is a Fellow of IEEE and Fellow of IFAC.
\end{IEEEbiography}

\end{document}